\DeclareMathOperator*{\argmin}{arg\,inf}
\newcommand{\Rbb}{\mathbb{R}}
\newcommand{\Pbb}{\mathbb{P}}
\newtheorem{definition}{Definition}
\newtheorem{assumption}{Assumption}
\newtheorem{lemma}{Lemma}
\newtheorem{proposition}{Proposition}
\newtheorem{theorem}{Theorem}
\newtheorem{example}{Example}
\newtheorem{remark}{Remark}
\begin{document}
%
\title{Minimal Realization Problems for Jump Linear Systems}

 \author{T. Sarkar, M. Roozbehani, M. A. Dahleh
 	\thanks{T. Sarkar, M. Roozbehani, M. A. Dahleh are with the EECS
 		Department, Massachusetts Institute of Technology, Cambridge, MA 02139 USA (e-mail: \texttt{tsarkar, mardavij,dahleh}@mit.edu).}
 }

\markboth{Journal of \LaTeX\ Class Files,~Vol.~6, No.~1, March~2018}%
{Shell \MakeLowercase{\textit{et al.}}: Bare Demo of IEEEtran.cls for Journals}

\maketitle

 \begin{abstract}
This paper addresses two fundamental problems in the context of jump linear systems (JLS). The first problem is concerned with characterizing the minimal state space dimension solely from input--output pairs and without any knowledge of the number of mode switches. The second problem is concerned with characterizing the number of discrete modes of the JLS. For the first problem, we develop a linear system theory based approach  and construct an appropriate Hankel--like matrix. The rank of this matrix gives us the state space dimension. For the second problem we show that minimal number of modes corresponds to the minimal rank of a positive semi--definite matrix obtained via a non--convex formulation.
 \end{abstract}
\section{Introduction}
Jump linear systems (JLS) are abstractions of hybrid systems, that combine continuous and discrete dynamics. They extend LTI systems in the sense that state space updates in a linear fashion contingent on the underlying discrete parameter. These are also known as multi--modal systems, due to multiple transitions or modes of operation controlled by the discrete parameter. In this work we assume that these transitions are independent of the past, \textit{i.e.}, the discrete parameter is an independent and identically distributed random variable. JLSs have been used in economics~\cite{do1999receding} where each mode corresponds to a business cycle and the state is economic growth; in statistics~\cite{fox2010bayesian} to model multiple related time series, where each series is generated from a single mode;  and large infrastructure systems such as air traffic networks~\cite{hamsa2017}.

These hybrid systems have been widely studied in control theory (See~\cite{costa2006discrete} for a detailed review on stability and optimal control). In this work we are concerned with the realization theory of JLS. Realization theory of LTI systems has been well studied by Kalman et. al. in~\cite{ho1966effective},~\cite{tether1970construction}. This was achieved by constructing a Hankel matrix mapping the past inputs to future outputs and minimal realization was characterized by the rank of this Hankel matrix. These results lead to a cascade of literature on linear system identification (See~\cite{ljung1998system} for a full review). The second set of results in LTI relates to model reduction. The objective is to find a simplified model for a complex system. Balanced truncation and optimal Hankel reduction are two related reduction algorithms accompanied by provable misspecification bounds(See~\cite{glover1984all}).

Hybrid systems have been widely studied in control theory (See~\cite{costa2006discrete} for a detailed review on stability and optimal control). The realization problem is understood to a large extent. Notions of controllability and observability are corner stones of realization in LTI systems theory. However, these do not extend to jump linear systems in an obvious fashion. For JLS authors in~\cite{ji_controllability},~\cite{ge2001reachability} define controllability and observability in a weak sense. Specifically, they present verifiable conditions under which a JLS is (weakly) controllable or observable. However, the algorithmic aspects of system identification are not discussed there. A more complete approach towards the realization theory is studied in~\cite{petreczky2007realization},\cite{petreczky2007metrics},\cite{petreczky2010realization},\cite{petreczky2018}. In ~\cite{petreczky2007realization},\cite{petreczky2007metrics},\cite{petreczky2018} there is no external input and discrete modes are observed. They discuss necessary and sufficient conditions for the existence of a realization, provide a characterization of minimality in jump linear systems that can be checked algorithmically. In~\cite{petreczky2010realization}, authors consider the case where there is no noise but there is a control input. The necessary and sufficient conditions for realizations are provided in terms of the rank of observability and reachability span. The case of unknown discrete modes with deterministic switching is considered in~\cite{westra2011identification}. The trade--off between the dimension continuous state variable and number of discrete modes is discussed there. 

In our work we assume the discrete modes are unknown and unobserved. We present results for the realization of a JLS when only the continuous input and output are observed. We believe that the unobservability of discrete modes makes more sense in the case of, for example, macroeconomic policy design in economics. There each business cycle corresponds to a discrete state (or mode) which is unknown apriori. The economic growth is then a function of the underlying mode. In this work, we first show how to find the minimal state space dimension. This involves finding the rank of a certain Hankel--like matrix mapping past inputs to future outputs. With the knowledge of state space dimension, we obtain the number of modes in the JLS that involves solving a non--convex problem. The number of modes corresponds to finding the rank of a positive semi--definite matrix that is obtained by transforming the output observation matrix in a certain way. We also discuss how finding discrete modes of a JLS is related to intrinsically hard problems such as finding the minimum number of mixture components in a Gaussian Mixture Model (GMM). Our focus here is not JLS identification rather we would like to characterize the minimal values of $(n, s)$, \textit{i.e.}, state space dimension and number of discrete parameters respectively, required to completely span the output space corresponding to any arbitrary length input sequence. 

The paper is arranged as follows -- in Section~\ref{model} we describe in detail our model and elucidate the problems in JLS identification. In Section~\ref{algorithm} we describe our algorithms to find the minimal realization and number of modes; in Section~\ref{results} we state and prove our main results including the correctness of our algorithms. Finally, we conclude in Section~\ref{conclusions}.
\section{Mathematical Notation}
Any vector $v \in \mathbb{R}^{p}$ is a $p \times 1$--column vector. The positive semi--definite cone is denoted by $\mathcal{S}_n^{+}$. For any $m \times n$ matrix, $A$, we denote by $A^{'}$ its transpose. The Frobenius norm of a matrix $A$ is given by $||A||_F^2 = \sum_{i, j}A_{ij}^2$. $X^{\dagger}$ is Moore--Penrose pseudo inverse of $X$. Define $[s] = \{1, 2, \ldots, s\}$, then $\sigma \in [s]^T$ denotes a permutation of length $T$ on the set $[s]$. For a matrix $A$, the operator $\text{vec}(A)$ vectorizes the matrix by stacking the columns one top of the other. We denote by $A \otimes B$ as the Kronecker product of $A$ and $B$.
\section{System Model}
\label{model}
Consider the following switched linear system (also called a jump linear system, or JLS for short)
\begin{align}
    \label{jls}
    x_{k+1} &= A[\theta(k)]_{n \times n} x_k + B_{n \times p} u_k \nonumber\\
    y_k &= C_{m \times n} x_k
\end{align}
Here $\theta(k) \in \{1, 2, \ldots, s\}$, where $i \in \{1, 2, \ldots, s\}$ is a mode (or switch). Further $\{\theta(k)\}_{k=1}^{\infty}$ are i.i.d with $\Pbb(\theta(k) = j) = p_j$. As is obvious from Eq.~\eqref{jls} $\{A[l]\}_{l=1}^s, B, C$ are $n \times n, n \times p, m \times n$ matrices respectively. The input $u_k$ is $p \times 1$ and the output $y_k$ is $m \times 1$. In general we only observe $\{(u_k, y_k)\}_{k=1}^{\infty}$ and the system parameters $\{s, \{A[l]\}_{l=1}^s, C, B\}$ are unknown. Consequently, the dimension of $A[l]$, \textit{i.e.}, $n$ is unknown. Throughout this paper we will also assume that the initial conditions $x_0 = 0$. Under mean--square stability of the JLS, this can be achieved by letting the JLS run to $t = \infty$ without any inputs. Although this is not a necessary assumption, it eases the analysis.

The goal of this paper is to determine the ``minimal'' $(s, n)$ pair that satisfies the observed output. The definition of minimality will be made precise later. A JLS is an inherently probabilistic system, due to transitions of $\theta(k)$; we introduce two parallel notions of such dynamical systems that will assist in determining identifiability of JLS. The first notion is the traditional control theoretic interpretation of a dynamical system, where we observe the input--output pair $\{(u_t, y_t)\}_{t=1}^{\infty}$. We borrow our second notion from statistical learning theory, where we run $N$ independent copies of the dynamical system, each for time $T$. Traditionally, $N, T$ are known as the \textbf{sample} and \textbf{information} complexity respectively. Informally, a large $N$ helps us achieve close approximations to various expected system parameters (such as expected system energy etc.) and a large $T$ helps us achieve the complete span of unobserved state variable $x(\cdot)$, $\grave{\text{a}}$ la Hankel approach. Together these parameters give us an idea of minimal $(s, n)$ pair. The advantage behind the second notion is that it helps us visualize and subsequently bound the information and sample complexity separately. Fig.~\ref{two_approach} encapsulates the ideas of the two notions. 
\begin{figure}[h]
\centering
    \includegraphics[width=0.7\columnwidth]{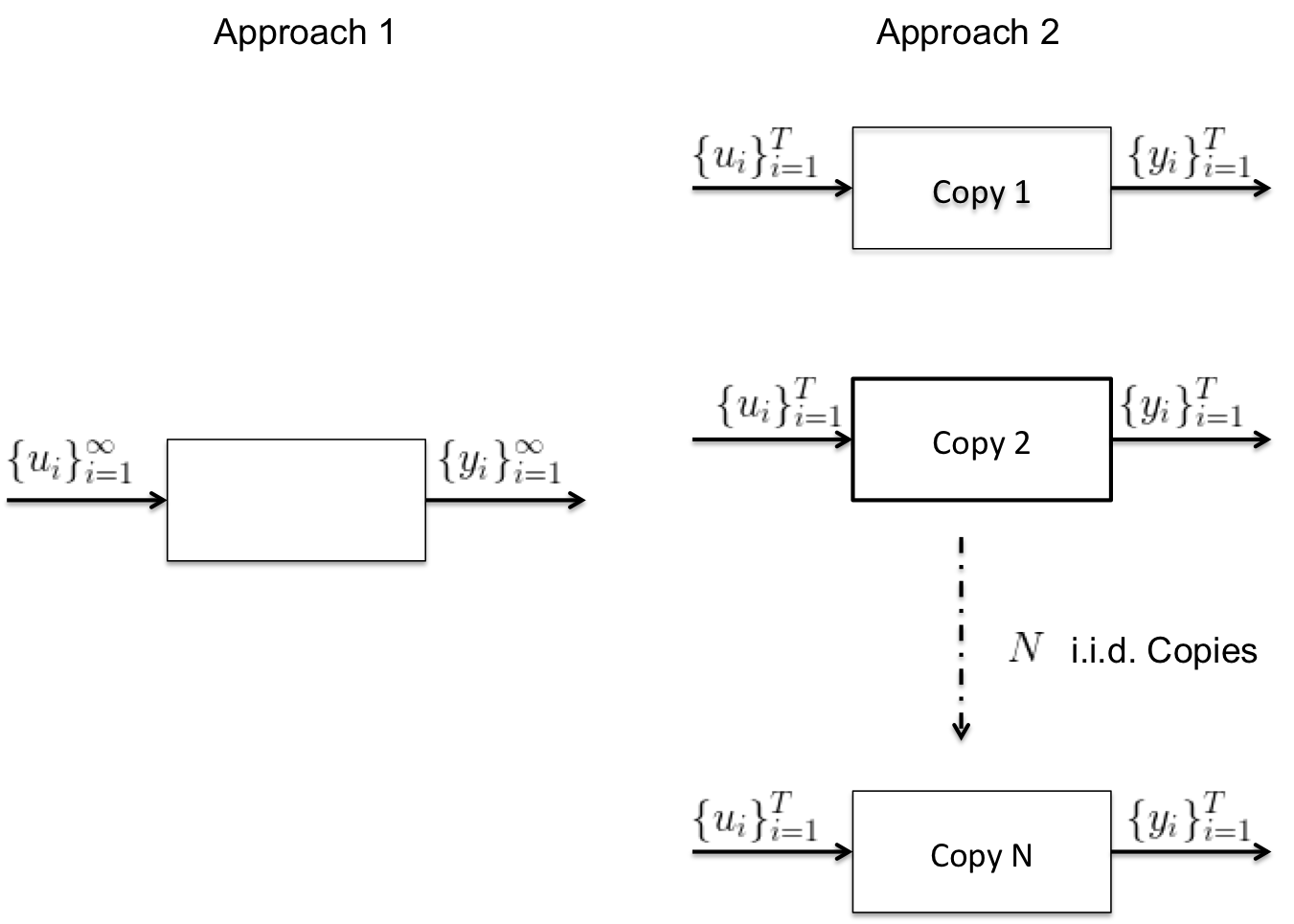}
    \caption{Different approaches to view JLS}
    \label{two_approach}
\end{figure}

Each copy $m$ should be thought of as a specific occurence of switches in the JLS. For example, consider a JLS where $s=2$, and if outputs are collected up to time $T = 2$ then for $x_0 = 0$
\begin{align*}
(A[\theta(1)], A[\theta(2)]) &\in \{(A_1, A_1), (A_1, A_2), (A_2, A_1), (A_2, A_2)\} \\
(y_1, y_2) &= (C A[\theta(1)] B, CA[\theta(2)]B)
\end{align*}
Copy $m$ corresponds to any specific occurence of $(A[\theta(1)], A[\theta(2)])$ and can be uniquely characterized as $i = (\theta(1), \theta(2)) \in [s]^2$. In general, $m = \sigma \in [s]^T$. It is the hope that $N$ is large enough that all possible switch sequence occurences happen sufficiently often. This is akin to running the JLS with $N$ restarts ($x_0= 0$). Due to this stochastic nature of a JLS, traditional notions of controllability and observability in LTI systems fail to apply. We follow the approach of~\cite{ji_controllability} and define appropriate concepts of stochastic controllability and observability.
\begin{definition}
\label{weak_controllability}
For the JLS in~\eqref{jls}, every initial state $x_0, \theta(0)$ and any $x \in \mathbb{R}^n$, if there exists an admissible control law $\{u_1, u_2 , \ldots\}$ such that the random time $T_{cw}$
\begin{equation*}
    T_{cw}(x) = \inf_{u \in U} \{k > 0: \mathbb{P}(x(k, x_0, \theta(0), u) = x) > 0\}
\end{equation*}
is finite almost surely, then the system is weakly controllable.
\end{definition}
Define $y_{\{x_{\text{init}} = x_1\}}(k)$ as the output at time $k$ when initial state was $x_1$ at time $k_0$ (or any initial time).
\begin{definition}
\label{weak_observability}
For the JLS in~\eqref{jls}, any initial state $\theta(0)$ and two initial $x$-states $x_1, x_2$, let $T_{ow}$, be the minimum time such that if the outputs $y_{\{x_{\text{init}} = x_1\}}(k) = y_{\{x_{\text{init}} = x_2\}}(k)$ for all $k$ between $k_0$ and $T_{ow} + k_0$ and the inputs u, are known in this time interval, then $\mathbb{P} (x_1 = x_2) > 0$. We say that this system is weakly observable if $T_{ow}$ is finite almost surely.
\end{definition}
We now provide some intuition for the two definitions. First, we show that weak observability and weak controllability correspond to observability and controllability in LTI systems respectively.
\begin{example}
\label{LTI}
For the case of LTI systems $s = 1$. Therefore, for any initial state $x_0$ ($\theta(k) = 1$ identically for all $k$), and any $x$, there exists an input sequence such that $T_{cw}(x) \leq n$. If the system was not controllable then there exists $x^{'}$ that is never reached by the state. As a result $T_{cw}(x^{'}) = \infty$. A dual argument can be made for observability.
\end{example}
\begin{example}
\label{s2}
Consider the case where $s=2$ with $B = [1, 0, 0]^T$, $C = [1, 0, 0]$, $A_1 = I_{3 \times 3}$ and 
\begin{align*}
    A_2 = \begin{bmatrix}
    0  & 0 & 1 \\
    1 & 0 & 0 \\
    0 & 1 & 0
    \end{bmatrix}
\end{align*}
It is not hard to observe that $\mathcal{B} = [B, A_2B, A_2^2 B]$ has column rank $3$ while 
\[
\mathcal{C} = \begin{bmatrix}
C \\
CA_2 \\
CA_2^2
\end{bmatrix}
\]
has row rank $3$. Assume that $p(\theta(k) = l) = 1/2$ then we for any initial $x_0, \theta(0)$ and final $x$, we have that with probability at least $1/4$, we can reach $x$ in finite time. Consider an input sequence $\{u_3, u_2, u_1, u_0\}$ such that $Bu_3 + A_2Bu_2 + A_2^2 Bu_1 + A_2^2 A_{\theta(0)} (u_0 + x_0) = x$. Such an input exists because $\mathcal{B}$ has rank $3$. The probability of this happening is $\mathbb{P}(\theta(2)=2, \theta(1) = 2) = 1/4$ and $T_{cw} \leq 4$ for all $x$. Thus the system is weakly controllable. 

Although for this JLS we know that there is a certain sequence of switches that allow the state space to span its full dimension, we do not know when (or if at all) it occurs by just observing a sequence of outputs. Consider another occurence where $\theta(1) = \theta(2) = 1$ which again can happen with probability $1/4$ then 
$\mathcal{B}^{''} = [B, A_1B, A_1^2 B]$ has column rank $1$ while 
\[
\mathcal{C}^{''} = \begin{bmatrix}
C \\
CA_1 \\
CA_1^2
\end{bmatrix}
\]
has row rank $1$. Therefore, under no knowledge of the hidden state it is not possible to determine the true state space dimension the same way as one would do for an LTI system.
\end{example}
We now introduce alternate but equivalent definitions of (weak) controllability and observability that will be useful.
\begin{definition}[Controllability Rank]
\label{reachability}
The controllability rank of a JLS defined in~\eqref{jls} is $r_B$, \textit{i.e.}, 
\begin{align*}
 \text{Column Space}(B) = \bigcup_{\sigma \in [s]^{\infty}}\text{Column Space}(&[B, A_{\sigma_1}B, \\
 &A_{\sigma_2} A_{\sigma_1}B, \ldots])   
\end{align*}
Then the controllability rank is given by $r_B = \text{Rank}(\text{Column Space}(B))$.
\end{definition}

\begin{definition}[Observability Rank]
\label{observability}
The observability rank of a JLS defined in~\eqref{jls} is $r_C$, \textit{i.e.}, 
\begin{align*}
\text{Row Space}(C) &= \bigcup_{\sigma \in [s]^{\infty}}\text{Row Space}\Bigg(\begin{bmatrix} C \\
 CA_{\sigma_1} \\
 CA_{\sigma_2}A_{\sigma_1}\\
 \vdots \end{bmatrix}\Bigg)   
\end{align*}
Then the observability rank is given by $r_C = \text{Rank}(\text{Row Space}(C))$.
\end{definition}
\begin{proposition}
\label{definition_equiv}
JLS is weakly observable if and only if observability rank is $n$. Similarly, JLS is weakly controllable if and only if the controllability rank is $n$. 
\end{proposition}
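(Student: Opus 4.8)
The plan is to prove the controllability equivalence in full and then obtain the observability equivalence by a duality (transpose) argument. The starting point is to identify, for $x_0=0$, the set of states reachable with positive probability in finite time. Expanding the recursion in~\eqref{jls} along a fixed switch realization $\sigma$ gives $x_k=\sum_{j}A_{\sigma_{k-1}}\cdots A_{\sigma_{j+1}}Bu_j$, so the states reachable under $\sigma$ are exactly $\text{col}([B,A_{\sigma_1}B,A_{\sigma_2}A_{\sigma_1}B,\ldots])$. Taking the union over all positive-probability $\sigma$ (those with $p_{\sigma_i}>0$) recovers precisely the object $\text{Column Space}(B)$ of Definition~\ref{reachability}. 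Hence $T_{cw}(x)<\infty$ exactly when $x$ lies in this reachable set $\mathcal{R}$, and weak controllability is the assertion $\mathcal{R}=\mathbb{R}^n$.

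Necessity ($\Rightarrow$) is the easy direction, handled by contraposition. If $r_B<n$ then every reachable direction lies in the proper subspace $V=\text{span}(\mathcal{R})\subsetneq\mathbb{R}^n$; choosing any $x\notin V$ yields $\Pbb(x(k,0,\theta(0),u)=x)=0$ for every admissible $u$ and every finite $k$, so $T_{cw}(x)=\infty$ and the system is not weakly controllable. The same computation on the autonomous error dynamics settles observability: since the inputs are known, the difference of outputs generated by initial states $x_1,x_2$ depends on $x_1-x_2$ only through the rows $C,CA_{\sigma_1},CA_{\sigma_2}A_{\sigma_1},\ldots$, so if $r_C<n$ some nonzero $x_1-x_2$ is annihilated by all of them, the outputs coincide for all time, and $T_{ow}=\infty$.

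For sufficiency ($\Leftarrow$) I would show that when the rank equals $n$ every target is reachable in bounded time. The horizon bound is routine: for fixed $\sigma$ the subspaces $\text{col}([B,\ldots,A_{\sigma_{k-1}}\cdots A_{\sigma_1}B])$ are nondecreasing in $k$ and stabilize within $n$ steps, so only finitely many products matter and each individual reachable $x$ has $T_{cw}(x)\le n$. The delicate point, and the step I expect to be the main obstacle, is that $\mathcal{R}$ is a \emph{union} of the mode-dependent subspaces $V_\sigma$ rather than their sum, and a union of subspaces can have full span $n$ while still omitting points of $\mathbb{R}^n$: two modes may each contribute a direction, yet no single realization of the switches realizes both at once. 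Thus $r_B=\dim\text{span}(\mathcal{R})=n$ must be upgraded to $\mathcal{R}=\mathbb{R}^n$, and the crux is to exhibit, for each target $x$, one positive-probability switch sequence whose controllability matrix already contains $x$ in its column space. I would attempt this through the smallest subspace $W$ containing $\text{col}(B)$ and invariant under every $A_l$, show $W=\text{span}(\mathcal{R})$, and try to realize all of $W$ along a single concatenated path of positive-probability modes. Verifying that such a path attains the full invariant subspace is exactly where care is needed, and where either additional structure on $\{A_l\}$ or a restriction of the claim to systems admitting a positive-probability sequence of rank $n$ may be required.

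Finally, the observability equivalence follows by applying the completed controllability argument to the dual system with matrices $(A_l',\,C',\,B')$: the row space of Definition~\ref{observability} is the reachable subspace of this dual, so $r_C=n$ corresponds to weak observability of the original system exactly as $r_B=n$ corresponds to weak controllability, and the known-input hypothesis is precisely what reduces the observability test to distinguishability of the autonomous difference $x_1-x_2$ used in the necessity step.
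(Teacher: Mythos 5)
Your necessity direction is sound, and reducing observability to controllability of the dual system $(A_l', C', B')$ is a standard, correct move. But the proposal has a genuine gap, which you flag yourself: the sufficiency direction is never completed, and the route you propose (the smallest $A_l$--invariant subspace $W$ containing $\text{col}(B)$) cannot close it. For context, the paper does not prove this proposition at all; it cites Theorems 2 and 5 of~\cite{ji_controllability}, specialized to i.i.d.\ switching ($p_{ji}=p_i$). The idea that actually completes a direct proof is a covering argument, not an invariance argument: the set of states reachable with positive probability is a \emph{countable} union $\bigcup_\sigma V_\sigma$ of subspaces (one per finite switch word), and $\mathbb{R}^n$ cannot be written as a countable union of proper subspaces (each has Lebesgue measure zero). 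Hence the reachable set equals $\mathbb{R}^n$ if and only if a \emph{single} positive-probability word $\sigma$ already satisfies $V_\sigma=\mathbb{R}^n$, i.e., one controllability matrix of full rank $n$. That is exactly how the paper itself glosses Ji--Chizeck's theorems in the sentence following the proposition, and it is the statement that is genuinely equivalent to weak controllability.

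Your suspicion that ``additional structure or a restriction of the claim may be required'' is correct in the strongest sense: if $r_B$ in Definition~\ref{reachability} is read as $\dim\,\text{span}\bigl(\bigcup_\sigma V_\sigma\bigr)$ --- the natural reading, and the quantity your $W$-based argument would compute --- then sufficiency is \emph{false}. Take $n=3$, $B=e_1$, $A_1=e_2e_1'$, $A_2=e_3e_1'$, $p_1=p_2=1/2$. Every product $A_iA_j$ vanishes, so along any switch realization $x_{k+1}=(x_k)_1\,e_{\theta(k)+1}+u_ke_1\in\text{span}\{e_1,e_{\theta(k)+1}\}$: the reachable set is a union of two planes, and $e_2+e_3$ is never reached with positive probability, even though the span of the union is all of $\mathbb{R}^3$, i.e., $r_B=n$. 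In this example $W=\mathbb{R}^3$ as well, so no argument that only tracks spans or invariant subspaces can succeed; the rank in the proposition must be understood as $\max_\sigma \text{rank}(V_\sigma)$, after which the countable-union lemma finishes the proof, and the same correction applies to your dual observability argument. A secondary issue with the same root cause: your claim that the subspaces stabilize within $n$ steps along a fixed word is unjustified, since Cayley--Hamilton-type stabilization holds for powers of one matrix but not for products of varying $A_{\sigma_i}$ --- this is precisely why the paper needs horizon $n^2+n-1$ in Proposition~\ref{finite_time_reachability} rather than $n$.
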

\begin{proof}
The proofs of this can be found as Theorems 2 and 5 in~\cite{ji_controllability}. This follows by assuming that $\mathbb{P}(\theta(k) = i) = p_{ii} = p_{ji}$ for all $j, i$.
\end{proof}
Definition~\ref{weak_controllability},~\ref{weak_observability} are based on Approach $1$ in Fig.~\ref{two_approach}. In fact Theorem 2, 5 in~\cite{ji_controllability} suggest that if a JLS is weakly controllable (or observable) then there exists a sequence of switches for which the controllability matrix is full rank \textit{i.e.}, there exists $\sigma^{L} \in [s]^L$ such that $B_L$ is rank $n$. 
\begin{equation*}
    B_L = \begin{bmatrix}
    B, A_{\sigma^L_{L-1}}B, \ldots, A_{\sigma^L_{L-1}} \hdots A_{\sigma^L_{1}}B 
    \end{bmatrix}
\end{equation*}
However, a major limitation of this result is that $L$ is unknown (even an upper bound). Further, for a JLS, $B_L$ can never be explicitly observed. In this work we will show that, with the help of Definitions~\ref{reachability},~\ref{observability} and the second approach shown in Fig.~\ref{two_approach}, there exists a value of $T$, also referred to as the information complexity, such that their maximum ranks are achieved. We next define $\mathcal{H}_{\mu^{T}, \sigma^{T}}$, a Hankel--like structure, recurring in our analysis.
\begin{definition}
	\label{hankel}
	For $\sigma^T, \mu^T \in [s]^T$, we define $C_{\mu^T}, B_{\sigma^T}$ as 
	\begin{align*}
	C_{\mu^T} &=\begin{bmatrix} C \\
	CA_{\mu^T_1} \\
	CA_{\mu^T_2}A_{\mu^T_1}\\
	\vdots \\
	CA_{\mu^T_T}A_{\mu^T_{T-1}}\ldots A_{\mu^T_1}
	\end{bmatrix} \\
	B_{\sigma^T} &= [B, A_{\sigma^T_1}B, A_{\sigma^T_2} A_{\sigma^T_1}B, \dots, A_{\sigma^T_T} \ldots A_{\sigma^T_2}A_{\sigma^T_1}B] 	
	\end{align*}
	Then $\mathcal{H}_{\mu^T, \sigma^T} = C_{\mu^T} B_{\sigma^T}$.
\end{definition}
\section{Algorithmic Setup}
\label{algorithm}
Recall the setup in Eq.~\eqref{jls} where only input--output pairs, \textit{i.e.}, $\{(u_t, y_t)\}_{t=1}^{\infty}$ are observed with $x_0 = 0$. We justified this by assuming a mean--square stable JLS. Equivalent conditions of stability can be found in Theorem 2.1~\cite{kotsalis2008balanced}.
\begin{assumption}
	\label{stability}
The JLS in Eq.~\eqref{jls} is mean--square stable.
\end{assumption}
For the purpose of our algorithm we will also assume that the JLS is both weakly controllable and weakly observable. If that was not the case, then our algorithm would find the maximal common subspace of observability and controllability. However, we do not go into details of such a case in this work and simply assume weak controllability and observability for simplicity (although the state space dimension is unknown). Further assume that we have $N$ i.i.d. samples of the JLS, \textit{i.e.}, we have $N$ identical copies of the JLS where the switches occurring in each copy are independent of another copy. We will assume that $N$ is sufficiently large so that the empirical average of any function of outputs is almost equal to its expectation.
\begin{assumption}
\label{minimal_rank}
JLS in Eq.~\eqref{jls} is weakly controllable and weakly observable. 
\end{assumption}
\begin{definition}
\label{minimal_rank_def}
Under Assumption~\ref{minimal_rank}, the minimal rank, $n$, of the JLS in Eq.~\eqref{jls} is the controllability (or observability) rank. The minimum number of discrete modes (corresponding to $n$), $s_n$, that completely describes the output space in response to inputs of arbitrary length is the minimal number of discrete modes. Together the pair $(n, s_n)$ is called the minimal realization of a JLS. 
\end{definition}
If the JLS is such that the state space dimension and discrete modes equal the minimal realization of the JLS, then it is a minimal JLS.
\begin{example}
\label{minimal_s}
The number of discrete modes needed to represent the output space is not unique, as one could always overparametrize the JLS. For a parsimonious approach we introduced the notion of minimality. Consider $C = [1, 1, 1], B = [1, 1, 1]^{'}, p_i = 1/4$ with 
\begin{align*}
A_1 = \begin{bmatrix}0, 0, 0 \\ 1, 1, 0 \\-1, 0, 0\end{bmatrix} &, A_2 = \begin{bmatrix}1, 0, -1 \\ -1, 0, 0 \\1, 0, 1\end{bmatrix} \\
A_3 = \begin{bmatrix}0, 0, 1 \\ 0, 0, 0 \\0, 0, 0\end{bmatrix} &, A_4 = \begin{bmatrix}1, 0, 0 \\ 0, 1, 0 \\0, 0, 1\end{bmatrix}
\end{align*}
Then it is easy to check that the JLS is controllable and observable. But the minimal $s \leq 3$, the reason is that $A_4 = A_1 + A_2 + A_3$, \textit{i.e.}, it is a linear combination of the other modes. As a result, it does not matter if we exclude $A_4$ (in an output sense), \textit{i.e.}, any output generated by $A_4$ can be generated by a linear combination of $A_i$. The exact nature of minimality is described in Proposition~\ref{switches}.
\end{example}
\begin{assumption}
\label{sample_complexity}
Let $y^{m}_t$ denote the output at time $t$ from copy $m$. Then we assume that $N$ is large enough so that for any bounded differentiable function $f(\cdot)$ we have that 
\begin{equation*}
    \frac{1}{N}\sum_{i=1}^N f(y^{m}_1, \ldots, y^{m}_T) = \mathbb{E}[f(Y_1, \ldots, Y_T)]
\end{equation*}
where $Y_i$ is the random variable indicating output at time $i$ for a fixed T.
\end{assumption}
\begin{remark}
\label{sample_remark}
The sufficient sample complexity depends on the value of $T$ and the probabilities of switches. For example let the sequence of switches that makes the JLS controllable be given by some $\sigma^T$ with $B_{\sigma^T} = [B, A_{\sigma^T_{T-1}}B, \ldots, A_{\sigma^T_{T-1}}\ldots A_{\sigma^T_1}B]$, and all other switch sequences give $0$ output. Further assume that $T$ is minimal in the sense that there is no switch sequence for $T^{'} < T$ that makes the JLS controllable. Then to conclude weak controllability in this pathological system one needs to observe this switch sequence at least once. The expected time of observing this is given by $T_O = (p_{\sigma^T_1}p_{\sigma^T_2} \ldots p_{\sigma^T_{T-1}})^{-1}$ where $p_{\sigma^T_j} = \mathbb{P}(\theta(j) = \sigma^T_j)$. In the worst case, sample complexity $N = \Omega(T_O)$. Observe that depending on the probabilities in might be exponential in $T$. As a result it is of interest to find the relationship between $N$ and $T$ (the information complexity). 
\end{remark}
\begin{assumption}[Orthogonal Differentiability]
	\label{independent_subspace}
	There exists $T^{*}$ such that 
	\begin{align*}
	\mathbb{E}_{\mu \in [s]^{T^{*}}}[C_{\mu} P_1 C^{'}_{\mu}] &\neq \mathbb{E}_{\mu \in [s]^{T^{*}}}[C_{\mu}P_2 C^{'}_{\mu}] \\
	\mathbb{E}_{\mu \in [s]^{T^{*}}}[B^{'}_{\mu} P_1 B_{\mu}] &\neq  \mathbb{E}_{\mu \in [s]^{T^{*}}}[B^{'}_{\mu} P_2 B_{\mu}]
	\end{align*}
	where $P_l \succeq 0$ and $\langle P_1, P_2 \rangle = 0$.
\end{assumption}	
\begin{remark}
	\label{validity_assumption}
	We would like to know how restrictive Assumption~\ref{independent_subspace} is in the case of JLS. Informally, it means that for orthogonal vectors the subspace generated by observability (and controllability) matrix should not be identical. This seems natural because since we do not observe hidden state variables (or internal state switches) if orthogonal state vectors generated the same output subspace we would not be able to differentiate between them. 
\end{remark}
The basis of our approach will be divided in two parts: excitation and observation. These would roughly correspond to controllability and observability of the JLS, as we discuss below. In the excitation stage, we excite different copies of the dynamical system for $t \in [0, T-1]$. Then we collect the output for $t \in [T, 2T]$ with $u_k = 0$ during that time. As discussed in the previous section, a copy $m$ simply corresponds to a sequence of switches, \textit{i.e.}, $m = (\mu^T \in [s]^{T}, \sigma^T \in [s]^T)$. Then one can show that 
\begin{align}
\label{hankel_eq}
\underbrace{\begin{bmatrix} y_T \\y_{T+1} \\ \vdots \\ y_{2T-1}\end{bmatrix}}_{y_{+}} &= \mathcal{H}_{\mu^T, \sigma^T} \underbrace{\begin{bmatrix} u_0 \\u_1 \\ \vdots \\ u_{T-1}\end{bmatrix}}_{u_{-}}
\end{align}
The relation in Eq.~\eqref{hankel_eq} is the Hankel matrix for copy $m$ and will be useful in the analysis of our algorithm for JLS realization.
\begin{definition}
	\label{output_jls}
	For every switch sequence occurence (or copy) $m$ of the JLS in Eq.~\eqref{jls}, with input $\{u_k\}_{k=0}^{\infty}$, we define
	 \begin{align*}{}
	 Y_m(t, \tau, \{u_k\}_{k=0}^{\infty}) = \begin{bmatrix}
	 y^{(m)}_t \\
	  y^{(m)}_{t+1} \\
	 \vdots \\
	 y^{(m)}_{\tau}
	 \end{bmatrix}
	 \end{align*}
	 as the outputs from time $t$ to $\tau$. 
\end{definition}
In our set up all our inputs have a finite support in the excitation stage $[0, T-1]$. Therefore, $Y_m(t, \tau, \{u_k\}_{k=0}^{\infty}) =  Y_m(t, \tau, \{u_k\}_{k=0}^{T-1})$. Indeed due to causality the outputs from $t$ to $\tau$ only depend on inputs from $0$ to $\tau$. Then for every $v \in \mathbb{R}^{pT}$ we have a unique bijection onto the space of inputs $\{u_k\}_{k=0}^{\infty}$ with $u_k = 0$ for all $k \geq T$. We define that map $\mathcal{B}$ as follows  
\begin{definition}
	\label{linearize}
For all $v \in \mathbb{R}^{pT}$ 
\begin{align*}
v = \begin{bmatrix}
u_0 \\
u_1 \\
\vdots \\
u_{T-1}
\end{bmatrix}
\end{align*}
where $u_j \in \mathbb{R}^{p}$, we define $\Pi$ as $\Pi(v) = \{u_j\}_{j=0}^{\infty}$ where $u_j = 0$ for all $j \geq T$.
\end{definition}
An important observation in our algorithm is that given $v$ that completely span $\mathbb{R}^{pT}$ we can obtain the minimal rank of a JLS. To that end we define an input basis set.
\begin{definition}
	\label{basis_set}
The basis set of inputs is $\mathcal{U} = \{v_1, v_2, \ldots, v_{pT}\}$ where $v_i \in \mathbb{R}^{pT}$ and $\text{dim}(\mathcal{U}) = pT$.
\end{definition}
We extend $\mathcal{U}$ to $\bar{\mathcal{U}}$ as 
\begin{equation}
\label{extension}
\bar{\mathcal{U}} = \mathcal{U} \cup_{i \neq j} \{v_i + v_j, v_i - v_j\}
\end{equation}
\begin{algorithm}[h]
	\caption{Learning Jump Linear Systems}
	\label{alg:learn_jls}
	\textbf{Input} $m= \text{Output\_Dimension(JLS)}$ \\
	$p = \text{Input\_Dimension(JLS)}$ \\
	$T=0.5 \times \text{Time to run an instance of the JLS}$ \\
	$N = \text{Sample Complexity}$\\
	\textbf{Output} Observation Matrices: $(Y_O, Y^{+}_O)$
	\begin{algorithmic}[1]
		\State Create basis set of inputs $\bar{\mathcal{U}} = \{\bar{v}_j\}_{j=1}^d$ as Eq.~\eqref{extension}
		\For {$j=1$ to $d$}
		\State $\Pi(\bar{v}_j)  = \{u_k^{(j)}\}_{k=0}^{\infty}$ 
		\For{$m=1$ to $N$}
		\State $\hat{Y}_{mj} = Y_m(T, 2T-1, \{u^{(j)}_k\}_{k=1}^{\infty})$ 
		\State $\hat{Y}^{+}_{mj} = Y_m(T+1, 2T, \{u^{(j)}_k\}_{k=1}^{\infty})$
		\EndFor
		\State  $\hat{Y}_j =\dfrac{1}{N}\mathlarger{\sum}_{m=1}^N \text{vec}(\hat{Y}_{mj} \hat{Y}^{'}_{mj})$
		\State  $\hat{Y}^{+}_j =\dfrac{1}{N}\mathlarger{\sum}_{i=1}^N \text{vec}(\hat{Y}^{+}_{mj} (\hat{Y}^{+})^{'}_{mj})$
		\EndFor
		\State $Y_O = [\hat{Y}_1, \hat{Y}_2, \ldots, \hat{Y}_{d}]$
		\Statex $Y_O^{+} = [\hat{Y}^{+}_1, \hat{Y}^{+}_2, \ldots, \hat{Y}^{+}_{d}]$
		\State \Return $(Y_O, Y^{+}_O)$
	\end{algorithmic}
\end{algorithm}
The output of Algorithm~\ref{alg:learn_jls} gives us two observation matrices $(Y_O, Y^{+}_O)$. We will see that $Y_O$ is useful in finding the minimal rank of the JLS and $Y^{+}_{O}$ will be used to find the number of different $A_l$s (or switches). To find the number of switches we need a swap transformation that we define below.
\begin{proposition}
\label{switches}
Let $S = \sum_{i=1}^s p_i(A_i \otimes A_i)$, then there exists a known linear bijective transformation such that $\mathcal{L}(S) = \sum_{i=1}^s p_i \text{vec}(A_i) \text{vec}(A_i)^T$. We refer to $\mathcal{L}(\cdot)$ as the swap transform.
\end{proposition}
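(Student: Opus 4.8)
The plan is to exhibit a single, fixed (i.e.\ $A$-independent) linear map $\mathcal{L}$ on $n^2 \times n^2$ matrices satisfying $\mathcal{L}(A \otimes A) = \mathrm{vec}(A)\mathrm{vec}(A)^{T}$ for every $A \in \Rbb^{n \times n}$; the proposition then follows immediately by linearity, since $\mathcal{L}(S) = \mathcal{L}\big(\sum_{i=1}^{s} p_i\, A_i \otimes A_i\big) = \sum_{i=1}^{s} p_i\, \mathcal{L}(A_i \otimes A_i) = \sum_{i=1}^{s} p_i\, \mathrm{vec}(A_i)\mathrm{vec}(A_i)^{T}$. Note that we never need to recover the individual $A_i$ from $S$: one just applies the fixed operator $\mathcal{L}$ to the single matrix $S$. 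The operator I have in mind is the block re-vectorization (rearrangement) map: partition an $n^2 \times n^2$ matrix into an $n \times n$ grid of $n \times n$ blocks, vectorize each block, and make the vectorized block in grid-position $(a,b)$ the corresponding row of $\mathcal{L}(M)$.

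The key step is an entrywise comparison of the two matrices using the column-stacking convention $\mathrm{vec}(A)_{(b-1)n + a} = A_{ab}$. With this convention one has $(A \otimes A)_{(a-1)n+c,\,(b-1)n+d} = A_{ab} A_{cd}$ and $[\mathrm{vec}(A)\mathrm{vec}(A)^{T}]_{(b-1)n+a,\,(d-1)n+c} = A_{ab}A_{cd}$. As $(a,b,c,d)$ range over $[n]^4$, each of the $n^4$ products $A_{ab}A_{cd}$ occurs exactly once in $A \otimes A$ and exactly once in $\mathrm{vec}(A)\mathrm{vec}(A)^{T}$. Hence the correspondence sending the source position $((a-1)n+c,\,(b-1)n+d)$ to the target position $((b-1)n+a,\,(d-1)n+c)$ is a bijection of the index set $[n^2]\times[n^2]$ onto itself, and it carries the entry $A_{ab}A_{cd}$ of $A\otimes A$ onto the entry of $\mathrm{vec}(A)\mathrm{vec}(A)^{T}$ holding the same product. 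I would then verify that this position map coincides with the block re-vectorization described above, giving $\mathcal{L}$ its closed form.

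Crucially, this index bijection does not depend on the entries of $A$: it is a permutation of matrix positions. Therefore $\mathcal{L}$, defined as the corresponding permutation of entries, is a well-defined \emph{linear} map on $\Rbb^{n^2\times n^2}$ and is \emph{bijective}, since permutation matrices are orthogonal and hence invertible (its inverse is the reverse rearrangement). Applying the identity $\mathcal{L}(A\otimes A) = \mathrm{vec}(A)\mathrm{vec}(A)^{T}$ term by term to $S$ and invoking linearity completes the argument.

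I expect the main obstacle to be purely the index bookkeeping: one must fix a single vectorization convention and track the four indices $(a,b,c,d)$ consistently through the Kronecker product, the outer product, and the block partition, so as to confirm that one universal permutation realizes all three and that it is independent of $A$. Once the entrywise identity above is verified, linearity and the permutation structure make both bijectivity and the extension from $A\otimes A$ to the weighted sum $S$ immediate.
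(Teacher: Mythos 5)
Your proof is correct. There is, however, nothing in the paper to compare it against: the paper states Proposition~\ref{switches} with no proof at all, simply asserting that a ``known'' linear bijective transformation exists (the intended object is the standard Kronecker-product rearrangement operator of Van Loan and Pitsianis, also called realignment), so your argument supplies exactly the details the paper omits. Your central identity is right and is the whole proof: with column stacking, $(A \otimes A)_{(a-1)n+c,\,(b-1)n+d} = A_{ab}A_{cd} = [\mathrm{vec}(A)\mathrm{vec}(A)^{T}]_{(b-1)n+a,\,(d-1)n+c}$, and the position map $((a-1)n+c,\,(b-1)n+d) \mapsto ((b-1)n+a,\,(d-1)n+c)$ is a permutation of $[n^2] \times [n^2]$ that does not depend on $A$; the induced permutation of entries is therefore a linear bijection $\mathcal{L}$ with $\mathcal{L}(A \otimes A) = \mathrm{vec}(A)\mathrm{vec}(A)^{T}$ for every $A$, and linearity transfers this to the weighted sum $S$. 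The $A$-independence you emphasize is precisely the property the paper relies on when it applies $\mathcal{L}$ to $U^{\dagger}Y_O^{+}V^{\dagger}$ in Eq.~\eqref{rank_minim} and in the proof of Theorem~\ref{alg2}, where $s$, the $p_i$, and the $A_i$ are all unknown. Two minor polish points: (i) your sentence that ``each of the $n^4$ products $A_{ab}A_{cd}$ occurs exactly once'' is true of \emph{positions} (indexed by the tuple $(a,b,c,d)$), not of values, since $A_{ab}A_{cd} = A_{cd}A_{ab}$; your argument only uses the positional statement, so nothing breaks, but it should be phrased that way; (ii) you can skip the deferred check that the position map coincides with the block re-vectorization by simply \emph{defining} $\mathcal{L}$ through the position permutation --- the entrywise identity then closes the proof with no remaining bookkeeping.
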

We also define a non--convex problem that will be at the center of our minimal realization. 
\begin{align}
\mathcal{PF}(Y_O, Y_O^{+}, P, n) &= \text{rank}(P) \label{rank_minim}\\
\text{s.t. } P &\in \mathcal{S}_{n^2}^{+} \nonumber\\
Y_O = UV^{'} &\text{ s.t. } U \in \Rbb^{mT \times n^2}, V \in \Rbb^{pT \times n^2} \nonumber \\
U^{\dagger} Y_O^{+} V^{\dagger} &= \mathcal{L}^{-1}(P) \nonumber
\end{align}
Then the algorithm for learning JLS switches is given in Algorithm~\ref{alg:switches}. The output of this algorithm is a pair of matrices which we show can be used to find the parameters of the JLS.
\begin{algorithm}[h]
	\caption{Learning JLS Switches}
	\label{alg:switches}
	\textbf{Input} Observation matrices $Y_O, Y^{+}_O$ \\
	\textbf{Output} $n \times n$ matrices $(P^{*}, Z^{*})$
	\begin{algorithmic}[1]
		\State $\frac{n(n+1)}{2} = \text{rank}(Y_O)$
		\State $P^{*} = \argmin_{P} \mathcal{PF}(Y_O, Y_O^{+}, P, n) $
		\State \Return $P^{*}$
	\end{algorithmic}
\end{algorithm}

We now build some intuition towards the Algorithm~\ref{alg:learn_jls},~\ref{alg:switches}. A key observation (that we show below) is that the output observation matrix can be represented as
\begin{align}
Y_O &= \mathbb{E}_{\mu \in [s]^T}[C_{\mu} \otimes C_{\mu}]\mathbb{E}_{\sigma \in [s]^T}[B_{\sigma} \otimes B_{\sigma}] V \label{output_mat}\\
Y^{+}_O &= \mathbb{E}_{\mu \in [s]^T}[C_{\mu} \otimes C_{\mu}](\sum_{i=1}^s p_i A_i \otimes A_i)\mathbb{E}_{\sigma \in [s]^T}[B_{\sigma} \otimes B_{\sigma}] V  \nonumber
\end{align}
where $V = [\bar{v}_1 \otimes \bar{v}_1, \ldots, \bar{v}_d \otimes \bar{v}_d]$

To find the number of switches (or mode) we need to recover the matrix $\sum_{i=1}^s p_i A_i \otimes A_i$. This is $U^{\dagger} Y_O^{+} V^{\dagger} $ in Eq.~\eqref{rank_minim}. However, due to the non--uniqueness of matrix factorization we could have that
\begin{align*}
\hat{Y}^{+}_O &= U^{\dagger} Y_O^{+} V^{\dagger}  = Z^{-1}(\sum_{i=1}^s p_i A_i \otimes A_i) Z \\
(\sum_{i=1}^s p_i A_i \otimes A_i) &= \mathcal{L}^{-1}(\sum_{i=1}^s p_i \text{vec}(A_i)\text{vec}(A_i)^{'})
\end{align*}
for some unknown full rank matrix $Z$. Problem $\mathcal{PF}$ in Eq.~\eqref{rank_minim} is essentially a non--convex problem that needs to be solved. Intuitively, it means that the output observation matrix $\hat{Y}_{O}^{+}$ is similar to a positive semi--definite matrix (transformed by the swap transformation). The problem (in computation) arises because positive semi--definiteness, or the rank, is not preserved under swap transformation, $\mathcal{L}$. 

Further, the problem is not much easier even if we relax our constraints. Let us consider the following biconvex problem on the convex set
\[
\argmin_{P \in \mathbb{S}_n^{+}, \text{trace}(Z) \geq b > 0} ||\hat{Y}^{+}_O Z - Z \mathcal{L}^{-1}(P)||^2_{F}
\]
Usually, an alternating minimization based approach is used to solve such problems (See~\cite{gorski2007biconvex}). These problems typically have multiple local minima and uniqueness is not guaranteed. Perhaps the underlying difficulty lies in the fact that Eq.~\eqref{output_mat} resembles finding the components of a mixture model(See~\cite{kalai2012disentangling}). In fact, Assumption~\ref{independent_subspace} is remarkably close to Condition 1 in~\cite{anandkumar2013overcomplete}. They show that such a non--degeneracy condition is required for reconstruction of individual components (or modes) in latent variable or topic models.  

In general, without more information on switch sequences or number of discrete modes, we observe that finding the unique minimal number of modes is a hard problem.
\begin{remark}
\label{input_stochastic}
For our algorithms we assumed that the inputs, $v_j \in \mathbb{R}^{pT}$, were deterministic. However, this is not necessary for our analysis to work. Since we analyze in a expectation based setting, we only need to ensure that $\mathbb{E}[v_j v_j^{'}] =  \Sigma_j$, where $\text{vec}(\Sigma_j)$ spans the space $\mathbb{R}^{pT}$ for our results to hold.
\end{remark}

\section{Main Results}
\label{results}
Our main results hinge on the fact that we can find $T$ large enough so that maximum rank is achieved. 	
\begin{proposition}
\label{finite_time_reachability}
The matrix $C_T = \mathbb{E}_{\mu^T}[C_{\mu^T} \otimes C_{\mu^T}]$ achieves maximum rank at $T = n^2 + n - 1$, \textit{i.e.}, there is no increase in rank of $C_T$ for $T \geq n^2 + n$. A similar assertion holds for $B_T = \mathbb{E}_{\sigma^T}[B_{\sigma^T} \otimes B_{\sigma^T}]$
\end{proposition}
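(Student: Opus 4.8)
The plan is to reduce the rank of $C_T=\mathbb{E}_{\mu^T}[C_{\mu^T}\otimes C_{\mu^T}]$ to a question about two independent sources of growth — powers of the averaged ``squared'' dynamics $S=\sum_i p_i(A_i\otimes A_i)$ and powers of the averaged dynamics $\bar A=\sum_i p_i A_i$ — and then to cap both by the Cayley--Hamilton theorem.

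First I would name the block rows of $C_{\mu^T}$ as $R_t=CA_{\mu_t}\cdots A_{\mu_1}$ for $0\le t\le T$ (with $R_0=C$). Since the rows of a Kronecker product $X\otimes Y$ are exactly the tensors of a row of $X$ with a row of $Y$, the rows of $C_{\mu^T}\otimes C_{\mu^T}$ are the rows of the blocks $R_i\otimes R_j$, and by linearity $\text{rowspace}(C_T)=\sum_{0\le i,j\le T}\text{rowspace}\big(\mathbb{E}_{\mu^T}[R_i\otimes R_j]\big)$. So it suffices to put each expected block in closed form. Fix $i\le j$ and split off the shared earliest switches: writing $R_i=CM_i$ and $R_j=\big(CA_{\mu_j}\cdots A_{\mu_{i+1}}\big)M_i$ with $M_i=A_{\mu_i}\cdots A_{\mu_1}$, the mixed--product property gives $R_i\otimes R_j=\big(C\otimes CA_{\mu_j}\cdots A_{\mu_{i+1}}\big)(M_i\otimes M_i)$, where $M_i\otimes M_i=(A_{\mu_i}\otimes A_{\mu_i})\cdots(A_{\mu_1}\otimes A_{\mu_1})$. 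Because $\{\mu_k\}$ are i.i.d., the index sets $\{1,\dots,i\}$ and $\{i+1,\dots,j\}$ are disjoint and independent, so the expectation factors into
\[
\mathbb{E}_{\mu^T}[R_i\otimes R_j]=\big(C\otimes C\bar A^{\,j-i}\big)\,S^{\,i},\qquad i\le j,
\]
with the symmetric formula $\big(C\bar A^{\,i-j}\otimes C\big)S^{\,j}$ when $i>j$. Establishing this closed form is the crux of the argument.

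Next I would read off the limiting row space. Writing $d=|i-j|$ and $k=\min(i,j)$, every block contributes a subspace of the form $\text{rowspace}\big((C\otimes C\bar A^{\,d})S^{\,k}\big)$ (or its transpose partner), and $\text{rowspace}(C_T)$ is the sum of these over all $k,d\ge 0$ with $k+d\le T$. Now the two caps enter. Since $\bar A$ is $n\times n$, Cayley--Hamilton writes $\bar A^{\,d}$ for $d\ge n$ as a linear combination of $I,\bar A,\dots,\bar A^{\,n-1}$, so $C\otimes C\bar A^{\,d}$ — and hence its row space after right multiplication by $S^{\,k}$ — already lies in the span of the terms with $d\le n-1$. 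Since $S$ is $n^2\times n^2$, the same theorem collapses $S^{\,k}$ for $k\ge n^2$ into the span of $I,S,\dots,S^{\,n^2-1}$. These two reductions act on independent indices and may be applied simultaneously, so the whole limiting row space is spanned by the blocks with $d\le n-1$ and $k\le n^2-1$.

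Finally I would match indices to $T$. A block with parameters $(k,d)$ appears in $C_T$ as soon as $k+d\le T$, and the largest relevant pair is $k=n^2-1$, $d=n-1$, so every spanning block is present once $T\ge(n^2-1)+(n-1)=n^2+n-2$. The row spaces are nested and bounded inside $\mathbb{R}^{n^2}$, hence $\text{rank}(C_T)$ is constant for all $T\ge n^2+n-2$; in particular the rank has stabilized by $T=n^2+n-1$ and does not increase for $T\ge n^2+n$, as claimed. The assertion for $B_T=\mathbb{E}_{\sigma^T}[B_{\sigma^T}\otimes B_{\sigma^T}]$ follows from the dual computation: the columns of $B_{\sigma^T}$ are $A_{\sigma_t}\cdots A_{\sigma_1}B$, which again share the earliest switches, and the analogous factorization (now splitting off the left factor $A_{\sigma_j}\cdots A_{\sigma_{i+1}}$) yields expected blocks of the form $(I\otimes\bar A^{\,d})\,S^{\,k}\,(B\otimes B)$, whose column spaces are capped by the identical two applications of Cayley--Hamilton and stabilize at the same $T$. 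The main obstacle is the first step: the block entries of $C_{\mu^T}\otimes C_{\mu^T}$ are \emph{not} independent across rows, so one must carefully isolate the shared initial segment of the switch sequence from the disjoint remainder before the expectation factors into the clean product of a $\bar A$-power and an $S$-power.
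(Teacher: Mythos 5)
Your proof is correct and takes essentially the same route as the paper's: both reduce $C_T$ to blocks of the form $(C\bar A^{\,d}\otimes C)S^{k}$ and $(C\otimes C\bar A^{\,d})S^{k}$, where $\bar A=\sum_i p_iA_i$ and $S=\sum_i p_i(A_i\otimes A_i)$, and then cap $d$ by Cayley--Hamilton on the $n\times n$ matrix $\bar A$ and $k$ by Cayley--Hamilton on the $n^2\times n^2$ matrix $S$. Your explicit i.i.d.-factorization of $\mathbb{E}[R_i\otimes R_j]$ rigorously justifies the block form that the paper simply displays, and your count in fact stabilizes the rank one step earlier ($T\ge n^2+n-2$), which is consistent with, and marginally sharper than, the stated bound.
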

\begin{proof}
Clearly it is obvious that $\text{rank}(C_T) \leq \text{rank}(C_{T+1})$. Also, 
\begin{align*}
&\mathbb{E}_{\mu \in [s]^T}[C_{\mu} \otimes C_{\mu}] = C_T\\ 
=&\begin{bmatrix}C \otimes C \\
C \otimes C (\sum_{i=1}^s p_i A_i) \\
C \otimes C (\sum_{i=1}^s p_i A_i)^2 \\
\vdots\\
C \otimes C (\sum_{i=1}^s p_i A_i)^{T} \\
\vdots \\
C(\sum_{i=1}^s p_i A_i)^T \otimes C \\
(C(\sum_{i=1}^s p_i A_i)^{T-1} \otimes C)(\sum_{i=1}^s p_i A_i \otimes A_i)\\
\vdots \\
(C \otimes C) (\sum_{i=1}^s p_i A_i \otimes A_i)^T
\end{bmatrix}
\end{align*}
Further for any $T \geq n^2 + n-1$, we see that in $C_{T+1}$ the extra submatrices (compared to $C_T$) added are of one of the following form
\[
\Bigg\{(C(\sum_{i=1}^s p_i A_i )^{T+1-k} \otimes C)(\sum_{i=1}^s p_i A_i \otimes A_i)^{k}\Bigg\}_{k=0}^{T+1}
\]
or
\[
\Bigg\{(C \otimes C (\sum_{i=1}^s p_i A_i)^{T+1-k})(\sum_{i=1}^s p_i A_i \otimes A_i)^{k}\Bigg\}_{k=0}^{T+1}
\] 
Then for any submatrix in the first set, 
\[
C_T^1(k)= (C(\sum_{i=1}^s p_i A_i )^{T+1-k} \otimes C)(\sum_{i=1}^s p_i A_i \otimes A_i)^{k}
\]
For $k < n^2$, we have that $T+1-k \geq n$. By Cayley--Hamilton theorem we know that 
\[
(\sum_{i=1}^s p_i A_i )^{t} = \sum_{j=0}^{n-1}\lambda_j C(\sum_{i=1}^s p_i A_i )^j
\]
for all $t \geq n$. Then we can express $C_T^1(k)$ as a linear combination of $\{(C(\sum_{i=1}^s p_i A_i )^{j} \otimes C)(\sum_{i=1}^s p_i A_i \otimes A_i)^{k}\}_{j=0}^{n-1}$. Since $j + k \leq n^2 + n - 1$ due to the conditions on $j, k$, we have that the set of submatrices $\{(C(\sum_{i=1}^s p_i A_i )^{j} \otimes C)(\sum_{i=1}^s p_i A_i \otimes A_i)^{k}\}_{j=0}^{n-1}$ were already present in $C_{T}, C_{n^2 + n -1}$.

Next, for $k \geq n^2$, we use Cayley--Hamilton theorem on $\sum_{i=1}^s p_i A_i \otimes A_i$,
\[
(\sum_{i=1}^s p_i A_i  \otimes A_i)^{t} = \sum_{j=0}^{n^2-1}\lambda_j (\sum_{i=1}^s p_i A_i \otimes A_i )^j
\]
Then we can express $C_T^1(k)$ as a linear combination of $\{(C(\sum_{i=1}^s p_i A_i )^{T+1-k} \otimes C)(\sum_{i=1}^s p_i A_i \otimes A_i)^{j}\}_{j=0}^{n^2-1}$. Again since these $j+k \leq n^2 +n -1$, they were present in $C_T, C_{n^2 + n - 1}$ as well. This shows that the newly added rows as we increase from $C_T$ to $C_{T+1}$ only adds extra rows (submatrices) that are a linear combination of rows that are a part of $C_{n^2 + n - 1}$. As a result there is no increase in rank. Similarly, for the second set of matrices
\[
\Bigg\{(C \otimes C (\sum_{i=1}^s p_i A_i)^{T+1-k})(\sum_{i=1}^s p_i A_i \otimes A_i)^{k}\Bigg\}_{k=0}^{T+1}
\] 
Define $C^2_T(k)$ as
\[
C_T^2(k)= (C \otimes C(\sum_{i=1}^s p_i A_i )^{T+1-k})(\sum_{i=1}^s p_i A_i \otimes A_i)^{k}
\]
for $k < n^2$, then $T+1-k \geq n$ and we can express this as a linear combination of $\{(C \otimes C (\sum_{i=1}^s p_i A_i )^{j})(\sum_{i=1}^s p_i A_i \otimes A_i)^{k}\}_{j=0}^{n-1}$ (By using Cayley--Hamilton theorem on $\sum_{i=1}^s p_iA_i$). Observe that each of these submatrices were already a part of $C_{T}$ and $C_{n^2 + n -1}$. Next, for $k \geq n^2$ we can express $C_T^1(k)$ as a linear combination of $\{(C \otimes C(\sum_{i=1}^s p_i A_i )^{T-k})(\sum_{i=1}^s p_i A_i \otimes A_i)^{j}\}_{j=0}^{n^2-1}$ (By using Cayley--Hamilton theorem on $(\sum_{i=1}^s p_iA_i \otimes A_i)$). Observe that these submatrices were present in $C_T$ as well. Then we observe that for $T \geq n^2+n-1$ there is no increase in rank as we go from $C_T$ to $C_{T+1}$.
\end{proof}
\begin{lemma}
	\label{maximal_t}
If there exists some $T^{*}$ for which Assumption~\ref{independent_subspace} is true, then it is true for all $T > T^{*}$.
\end{lemma}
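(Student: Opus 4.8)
The plan is to exploit the nested block structure of $C_{\mu^T}$ and $B_{\sigma^T}$ from Definition~\ref{hankel}: lengthening the switch sequence by one appends a new block-row (for $C$) or block-column (for $B$) while leaving all the previous blocks untouched. Concretely, writing $\mu^{T+1} = (\mu^T, \mu_{T+1})$, one has
\[
C_{\mu^{T+1}} = \begin{bmatrix} C_{\mu^T} \\ C A_{\mu_{T+1}} \cdots A_{\mu_1} \end{bmatrix},
\]
so $C_{\mu^T}$ occupies the top block-rows of $C_{\mu^{T+1}}$. The key consequence is that $C_{\mu^T} P C'_{\mu^T}$ is exactly the top-left principal block of $C_{\mu^{T+1}} P C'_{\mu^{T+1}}$, and crucially this block is built only from $A_{\mu_1},\ldots,A_{\mu_T}$ and carries no dependence on the newly added index $\mu_{T+1}$.

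I would then pass to expectations. Because the switches are i.i.d., the law on $[s]^{T+1}$ restricts to the law on $[s]^T$ in its first $T$ coordinates, and since the top-left block does not involve $\mu_{T+1}$, averaging over $\mu_{T+1}$ acts trivially on it. Hence for every fixed $P$ the top-left principal block of $\mathbb{E}_{\mu \in [s]^{T+1}}[C_{\mu} P C'_{\mu}]$ equals $\mathbb{E}_{\mu \in [s]^{T}}[C_{\mu} P C'_{\mu}]$.

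With this identity the induction is immediate. Suppose Assumption~\ref{independent_subspace} holds at $T^{*}$, so there are $P_1, P_2 \succeq 0$ with $\langle P_1, P_2\rangle = 0$ and $\mathbb{E}_{\mu \in [s]^{T^{*}}}[C_\mu P_1 C'_\mu] \neq \mathbb{E}_{\mu \in [s]^{T^{*}}}[C_\mu P_2 C'_\mu]$ (and likewise for $B$). Using the \emph{same} pair $(P_1,P_2)$ at level $T^{*}+1$, the two expected Gramians have top-left blocks equal to the corresponding level-$T^{*}$ Gramians, which differ by hypothesis; matrices whose principal blocks differ are themselves distinct, so the inequality persists at $T^{*}+1$. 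Orthogonality and positive semidefiniteness of $(P_1,P_2)$ are properties of the matrices alone, untouched by the change in $T$, so the pair remains admissible. Iterating gives the claim for all $T > T^{*}$. The argument for $B_{\sigma^T}$ is identical with block-rows replaced by block-columns: $B_{\sigma^T}$ is the leftmost block-columns of $B_{\sigma^{T+1}}$, so $B'_{\sigma^T} P B_{\sigma^T}$ is the top-left block of $B'_{\sigma^{T+1}} P B_{\sigma^{T+1}}$, and the same marginalization applies.

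The only point demanding care is the marginalization step: one must confirm that the singled-out block genuinely has no dependence on the appended switch $\mu_{T+1}$ and that the $[s]^{T+1}$ law restricts to the $[s]^{T}$ law on the first $T$ coordinates. Both are immediate from the i.i.d. assumption on the switches, so I expect no substantive obstacle — the lemma is essentially a monotone-embedding statement dressed in expectation notation.
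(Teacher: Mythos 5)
Your proof is correct, but there is nothing in the paper to compare it against: the authors state Lemma~\ref{maximal_t} with no proof at all (the proof environment that follows it belongs to Lemma~\ref{full_rank}), so your argument fills a genuine gap in the manuscript rather than paralleling or diverging from an existing one. The substance checks out. By Definition~\ref{hankel}, $C_{\mu^{T+1}}$ is obtained from $C_{\mu^{T}}$ by appending the single block row $CA_{\mu_{T+1}}A_{\mu_T}\cdots A_{\mu_1}$, so $C_{\mu^{T}}P C'_{\mu^{T}}$ is exactly the leading principal block of $C_{\mu^{T+1}}P C'_{\mu^{T+1}}$ and involves only the first $T$ switches; because the $\theta(k)$ are i.i.d., the law on $[s]^{T+1}$ marginalizes to the law on $[s]^{T}$ in the first $T$ coordinates, so the leading principal block of $\mathbb{E}_{\mu\in[s]^{T+1}}[C_{\mu}P C'_{\mu}]$ equals $\mathbb{E}_{\mu\in[s]^{T}}[C_{\mu}P C'_{\mu}]$, and two matrices whose leading principal blocks differ must differ. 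The column analogue for $B_{\sigma^T}$ is equally valid. The one point you should tighten is a quantifier issue in Assumption~\ref{independent_subspace} itself: the way the paper invokes it in the proof of Lemma~\ref{full_rank} (deriving a contradiction from the mere existence of a pair with equal Gramians) shows the intended reading is universal --- the inequalities must hold for \emph{every} pair $P_1, P_2 \succeq 0$ with $\langle P_1, P_2\rangle = 0$ and $P_1 \neq P_2$ --- whereas you phrase it existentially (``so there are $P_1, P_2$''). This does not break your proof, since its engine is the persistence of the inequality for a \emph{fixed} admissible pair: under the universal reading you simply apply the same block-embedding and marginalization step to each admissible pair separately instead of to one witness. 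But the write-up should state the quantifier correctly before running the induction, since admissibility of a pair is independent of $T$ in either reading and the conclusion you need at level $T^{*}+1$ matches whichever quantifier the hypothesis carries at level $T^{*}$.
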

\begin{lemma}
	\label{full_rank}
	A direct consequence of Assumption~\ref{independent_subspace} is that $\mathbb{E}_{\mu \in [s]^{n^2+n-1}}[C_{\mu} \otimes C_{\mu}]$ is at least rank $\frac{n(n+1)}{2}$ if and only if it is weakly observable. A similar assertion holds for $\mathbb{E}_{\sigma \in [s]^{n^2+n-1}}[B_{\sigma} \otimes B_{\sigma}]$
\end{lemma}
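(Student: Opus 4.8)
The plan is to view $C_T=\mathbb{E}_{\mu}[C_{\mu}\otimes C_{\mu}]$ not as an abstract matrix but through its action on vectorized matrices. Using the identity $(C_{\mu}\otimes C_{\mu})\,\mathrm{vec}(P)=\mathrm{vec}(C_{\mu}PC_{\mu}')$, one gets $C_T\,\mathrm{vec}(P)=\mathrm{vec}\big(\mathbb{E}_{\mu}[C_{\mu}PC_{\mu}']\big)$. The crucial structural observation is that this map sends symmetric $P$ to symmetric matrices and antisymmetric $P$ to antisymmetric matrices, so $C_T$ respects the splitting $\mathbb{R}^{n^2}=\mathrm{Sym}\oplus\mathrm{Antisym}$ with $\dim\mathrm{Sym}=\tfrac{n(n+1)}{2}$. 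Consequently it suffices to control the rank of $C_T$ restricted to the symmetric subspace, whose dimension is exactly the target $\tfrac{n(n+1)}{2}$; thus rank $\ge\tfrac{n(n+1)}{2}$ is equivalent to $C_T$ being \emph{injective} on symmetric $P$. I would prove the equivalence ``injective on $\mathrm{Sym}$ $\iff$ weakly observable'' and read off the claim. Proposition~\ref{finite_time_reachability} guarantees that $T=n^2+n-1$ is large enough that no further rows matter, and a Cayley--Hamilton argument on the unobservable subspace shows this length already exposes the full kernel, so working at this fixed $T$ loses nothing.

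For the direction weakly observable $\Rightarrow$ rank $\ge\tfrac{n(n+1)}{2}$, suppose a symmetric $P$ lies in the kernel, i.e. $\mathbb{E}_{\mu}[C_{\mu}PC_{\mu}']=0$. First handle $P\succeq0$: each diagonal block $\mathbb{E}_{\mu}\big[(CA_{\mu_i}\cdots A_{\mu_1})P(CA_{\mu_i}\cdots A_{\mu_1})'\big]$ is an average of positive semi--definite matrices, so it vanishes only if $CA_{\mu_i}\cdots A_{\mu_1}P^{1/2}=0$ for every switch sequence with positive probability; hence $\mathrm{range}(P)$ lies in $\bigcap_{\sigma,k}\ker(CA_{\sigma_k}\cdots A_{\sigma_1})$, which is $\{0\}$ by weak observability (Proposition~\ref{definition_equiv}), forcing $P=0$. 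For general symmetric $P$, I would use the spectral decomposition $P=P_{+}-P_{-}$ with $P_{\pm}\succeq0$ and $\langle P_{+},P_{-}\rangle=0$; then $\mathbb{E}_{\mu}[C_{\mu}P_{+}C_{\mu}']=\mathbb{E}_{\mu}[C_{\mu}P_{-}C_{\mu}']$, and Assumption~\ref{independent_subspace} (valid at this $T$ by Lemma~\ref{maximal_t}) forces $P_{+}=P_{-}$, whence $P=0$. This establishes injectivity on $\mathrm{Sym}$ and so rank $\ge\tfrac{n(n+1)}{2}$.

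For the converse I would argue by contrapositive. If the JLS is not weakly observable, the observability rank is below $n$, so there is a nonzero $x$ with $CA_{\sigma_k}\cdots A_{\sigma_1}x=0$ for all sequences, i.e. $C_{\mu}x=0$ for every $\mu$. Taking the rank--one symmetric matrix $P=xx'\succeq0$ gives $C_{\mu}PC_{\mu}'=(C_{\mu}x)(C_{\mu}x)'=0$, hence $C_T\,\mathrm{vec}(P)=0$ with $P\neq0$. Thus $C_T$ has a nontrivial kernel inside $\mathrm{Sym}$, its rank on the symmetric subspace is strictly less than $\tfrac{n(n+1)}{2}$, and the stated bound fails. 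The assertion for $B_T=\mathbb{E}_{\sigma}[B_{\sigma}\otimes B_{\sigma}]$ follows by the same argument applied to transposes, exchanging the roles of columns and rows and of controllability and observability (Definition~\ref{reachability} in place of Definition~\ref{observability}).

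The step I expect to be the main obstacle is the general--symmetric (indefinite) case in the forward direction: the positive semi--definite argument via diagonal blocks is clean, but extending injectivity to all of $\mathrm{Sym}$ genuinely requires the non--degeneracy built into Assumption~\ref{independent_subspace}, and one must check that the orthogonal splitting $P=P_{+}-P_{-}$ interacts correctly with the hypothesis $\langle P_1,P_2\rangle=0$ appearing there. A secondary subtlety is the bookkeeping that relates the rank of the \emph{full} matrix to its restriction to $\mathrm{Sym}$: one must either phrase the claim purely on the symmetric block (which is what Algorithm~\ref{alg:switches} ultimately exploits) or separately verify that the antisymmetric block cannot inflate the rank past $\tfrac{n(n+1)}{2}$ when observability fails.
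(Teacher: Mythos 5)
Your forward direction (weakly observable $\Rightarrow$ rank $\geq \tfrac{n(n+1)}{2}$) is correct and is essentially the paper's own proof: a nonzero symmetric kernel element is split into its orthogonal positive and negative parts $P_{+}-P_{-}$, the kernel condition becomes $\mathbb{E}_{\mu}[C_{\mu}P_{+}C_{\mu}'] = \mathbb{E}_{\mu}[C_{\mu}P_{-}C_{\mu}']$, and this contradicts Assumption~\ref{independent_subspace}; injectivity on the symmetric subspace then forces the image of that $\tfrac{n(n+1)}{2}$-dimensional subspace to have full dimension, giving the rank bound. (Your separate PSD-only argument via diagonal blocks and weak observability is fine but redundant, since the general symmetric case subsumes it.)

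The converse is where you have a genuine gap, and it is not the benign ``bookkeeping subtlety'' you label it as. From $P = xx'$ in the kernel you conclude that the rank of the restriction to $\mathrm{Sym}$ is below $\tfrac{n(n+1)}{2}$, and then assert ``the stated bound fails''; but the lemma bounds the rank of the \emph{full} matrix, which equals the sum of the ranks on the symmetric and antisymmetric subspaces (the map sends symmetric to symmetric and antisymmetric to antisymmetric), and the antisymmetric block contributes up to $\tfrac{n(n-1)}{2}$ on its own. Your suggested repair --- verifying that the antisymmetric block cannot push the total past $\tfrac{n(n+1)}{2}$ when observability fails --- cannot be carried out by dimension counting: an unobservable direction $x$ only forces the kernel to contain $\mathrm{span}\{xv',\, wx' : v, w \in \mathbb{R}^n\}$, which has dimension $2n-1$, so the total rank may a priori be as large as $(n-1)^2$, and $(n-1)^2 \geq \tfrac{n(n+1)}{2}$ for every $n \geq 5$. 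The paper closes this direction without any rank computation at all: Assumption~\ref{independent_subspace} is a standing hypothesis of the lemma, and an unobservable $x$ violates it directly via the pair $P_1 = 0$, $P_2 = xx'$ (orthogonal and PSD, with $\mathbb{E}_{\mu}[C_{\mu}P_1C_{\mu}'] = 0 = \mathbb{E}_{\mu}[C_{\mu}P_2C_{\mu}']$). Hence under the lemma's hypotheses weak observability always holds, and the implication ``rank $\geq \tfrac{n(n+1)}{2} \Rightarrow$ weakly observable'' is vacuously true. You already use exactly this pairing in your forward direction; deploying it for the converse replaces your contrapositive argument, which cannot be completed as written.
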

\begin{proof}
If the JLS is weakly observable then there exists no $x \in \mathbb{R}^n$ such that for all $\mu \in [s]^T$
\[
C_{\mu} x = 0
\]
Thus $\mathbb{E}_{\mu \in [s]^T}[C_{\mu} x x^T C_{\mu}] \neq 0$. Next consider that indeed $\text{rank}(\mathbb{E}_{\mu \in [s]^T}[C_{\mu} \otimes C_{\mu}]) < \frac{n(n+1)}{2}$. This means that there exists a symmetric matrix $S$ such that 
\begin{align*}
&\mathbb{E}_{\mu \in [s]^T}[C_{\mu} \otimes C_{\mu}] \text{vec}(S) = 0 \\
&\mathbb{E}_{\mu \in [s]^T}[C_{\mu} S C^{'}_{\mu}] = 0 \\
&(\mathbb{E}_{\mu \in [s]^T}[C_{\mu} P_1 C^{'}_{\mu}])^{'}  = (\mathbb{E}_{\mu \in [s]^T}[C_{\mu} P_2 C^{'}_{\mu}])^{'}  
\end{align*}
Here $\langle P_1 , P_2 \rangle = 0$. However this contradicts Assumption~\ref{independent_subspace}. Thus $\text{rank}(\mathbb{E}_{\mu \in [s]^T}[C_{\mu} \otimes C_{\mu}]) \geq \frac{n(n+1)}{2}$. Converse follows trivially from Assumption~\ref{independent_subspace}. Let there exist $x$ such that  
\begin{equation*}
\mathbb{E}[C_{\mu} xx^{'}C_{\mu}^{'}] = 0
\end{equation*}
This contradicts Assumption~\ref{independent_subspace} by letting $P_1 = 0, P_2 = xx^{'}$.
\end{proof}
\begin{proposition}
	\label{rank_kron}
Under Assumptions~\ref{stability}--\ref{independent_subspace}, the matrix $\mathcal{H}_T$ denoted by
\begin{equation*}
\mathcal{H}_T = \mathbb{E}_{\mu, \sigma \in [s]^T}[\mathcal{H}_{\mu, \sigma} \otimes \mathcal{H}_{\mu, \sigma}]
\end{equation*}
has full rank for all $T \geq n^2 + n -1$.
\end{proposition}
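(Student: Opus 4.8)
The plan is to reduce the claim to the two rank results already established for the ``marginal'' operators $C_T = \mathbb{E}_{\mu^T}[C_{\mu^T}\otimes C_{\mu^T}]$ and $B_T = \mathbb{E}_{\sigma^T}[B_{\sigma^T}\otimes B_{\sigma^T}]$ by first factoring $\mathcal{H}_T$ as a product of these two operators. The key algebraic step is the Kronecker mixed--product identity: since $\mathcal{H}_{\mu,\sigma} = C_\mu B_\sigma$, we have
\begin{equation*}
\mathcal{H}_{\mu,\sigma}\otimes \mathcal{H}_{\mu,\sigma} = (C_\mu B_\sigma)\otimes(C_\mu B_\sigma) = (C_\mu \otimes C_\mu)(B_\sigma \otimes B_\sigma).
\end{equation*}
Because the excitation--stage switches $\sigma^T$ (acting on $[0,T-1]$) and the observation--stage switches $\mu^T$ (acting on $[T,2T]$) are drawn from disjoint, hence independent, blocks of the i.i.d.\ sequence $\{\theta(k)\}$, the expectation of the product factors into a product of expectations. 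First I would therefore write
\begin{equation*}
\mathcal{H}_T = \mathbb{E}_{\mu}[C_\mu\otimes C_\mu]\,\mathbb{E}_{\sigma}[B_\sigma\otimes B_\sigma] = C_T B_T,
\end{equation*}
which turns the problem into a statement about the rank of a product.

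Next I would invoke Proposition~\ref{finite_time_reachability} to fix the horizon: both $C_T$ and $B_T$ attain their maximal rank already at $T = n^2+n-1$ and do not grow thereafter, so it suffices to analyze $C_{n^2+n-1}B_{n^2+n-1}$, and the conclusion then propagates to every $T\ge n^2+n-1$. I would then bring in Lemma~\ref{full_rank}: under Assumptions~\ref{minimal_rank} and \ref{independent_subspace}, the operator $C_T$, viewed as the linear map $S \mapsto \mathbb{E}[C_\mu S C_\mu']$ on symmetric matrices, is injective (its kernel is trivial, for otherwise one produces orthogonal $P_1,P_2$ violating Assumption~\ref{independent_subspace}), hence has rank $\tfrac{n(n+1)}{2}$; dually, the controllability version shows that $B_T$, as the map $W\mapsto \mathbb{E}[B_\sigma W B_\sigma']$, surjects onto the space of symmetric $n\times n$ matrices, which also has dimension $\tfrac{n(n+1)}{2}$.

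The core of the argument, and the step I expect to be the main obstacle, is then showing that forming the product does not lose rank, i.e.\ that $\operatorname{rank}(C_TB_T)=\tfrac{n(n+1)}{2}$ rather than merely the weak Sylvester bound $\ge n$. For this I would work on the relevant invariant subspace: the image of $B_T$ restricted to symmetric inputs is exactly the full symmetric subspace $\mathrm{Sym}(n)$ (by the surjectivity above), while $C_T$ is injective precisely on $\mathrm{Sym}(n)$ (by the kernel computation above). Consequently $\operatorname{im}(B_T)\cap\ker(C_T)=\{0\}$ on this subspace, so $C_T$ carries the $\tfrac{n(n+1)}{2}$--dimensional image of $B_T$ isomorphically onto its range, giving $\operatorname{rank}(\mathcal{H}_T)=\operatorname{rank}(C_TB_T)=\tfrac{n(n+1)}{2}$, the maximal value compatible with the symmetric structure and exactly the rank used in Algorithm~\ref{alg:switches}. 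The delicate point is to keep the symmetric/antisymmetric decomposition straight so that the injectivity of $C_T$ and the surjectivity of $B_T$ are applied on the same $\tfrac{n(n+1)}{2}$--dimensional subspace; Assumption~\ref{independent_subspace}, which only constrains PSD (hence symmetric) test matrices, is precisely what guarantees the alignment on that subspace while saying nothing spurious about the antisymmetric complement.
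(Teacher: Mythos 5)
Your first two steps coincide exactly with the paper's proof: factor $\mathcal{H}_{\mu,\sigma}\otimes\mathcal{H}_{\mu,\sigma}=(C_\mu\otimes C_\mu)(B_\sigma\otimes B_\sigma)$ by the mixed--product identity, use independence of the two switch blocks to get $\mathcal{H}_T=C_TB_T$, then invoke Proposition~\ref{finite_time_reachability} and Lemma~\ref{full_rank}. The divergence is in the conclusion, and it is a genuine one. The paper's proof asserts $\text{rank}(\mathcal{H}_T)=n^2$ (this is what ``full rank'' means there, since $\mathcal{H}_T$ factors through an $n^2$--dimensional space), whereas you conclude $\text{rank}(\mathcal{H}_T)=\frac{n(n+1)}{2}$ exactly. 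Your exact--equality claim is false already in the simplest instance: take $s=1$, i.e.\ a controllable and observable LTI system. Then $C_T=C_{\text{obs}}\otimes C_{\text{obs}}$ and $B_T=B_{\text{ctrl}}\otimes B_{\text{ctrl}}$, so $\mathcal{H}_T=(C_{\text{obs}}B_{\text{ctrl}})\otimes(C_{\text{obs}}B_{\text{ctrl}})$, whose rank is exactly $n^2>\frac{n(n+1)}{2}$ for $n\geq 2$. The flaw is in your last step: the maps $X\mapsto \mathbb{E}[C_\mu X C_\mu^{'}]$ and $W\mapsto \mathbb{E}[B_\sigma W B_\sigma^{'}]$ leave both the symmetric and the antisymmetric subspaces invariant, and the symmetric/antisymmetric images land in mutually orthogonal subspaces, so the total rank of $\mathcal{H}_T$ is the sum of the rank of its restriction to $\text{vec}(\mathrm{Sym}(n))$ and the rank of its restriction to the antisymmetric complement. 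Your injectivity--on--$\mathrm{Sym}$/surjectivity--onto--$\mathrm{Sym}$ argument (which is a correct and useful elaboration of what the paper leaves implicit) controls only the first summand; it therefore yields the lower bound $\text{rank}(\mathcal{H}_T)\geq\frac{n(n+1)}{2}$, not equality, and certainly not the $n^2$ that the proposition asserts.

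To be fair, part of this gap is inherited from the paper itself: Assumption~\ref{independent_subspace} and Lemma~\ref{full_rank} constrain only symmetric (indeed PSD) test matrices, so the paper's one--line jump from those ingredients to $\text{rank}(\mathcal{H}_T)=n^2$ is likewise unsupported as written --- the antisymmetric block requires an argument that appears nowhere. You clearly noticed this (your closing remark that the assumptions say ``nothing spurious about the antisymmetric complement''), but the correct reaction to that observation is that full rank $n^2$ does not follow from the cited lemmas, not that the rank collapses to the symmetric dimension. Your reading is also inconsistent with the paper's internal logic: in Theorem~\ref{alg1}, $Y_O$ has rank $\frac{n(n+1)}{2}$ precisely \emph{because} $\mathcal{H}_T$ is post--multiplied by $V$, whose columns $\bar v_j\otimes\bar v_j$ span only the symmetric subspace; if $\mathcal{H}_T$ itself already had rank $\frac{n(n+1)}{2}$, that projection step would do nothing, and the paper's explanation of it would be vacuous.
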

\begin{proof}
From Definition~\ref{hankel}, it follows that 
\[
\mathcal{H}_T = \mathbb{E}_{\mu \in [s]^T}[C_{\mu} \otimes C_{\mu}]\mathbb{E}_{\sigma \in [s]^T}[B_{\sigma} \otimes B_{\sigma}]
\]
By the assumptions of weak controllability and observability combined with Proposition~\ref{finite_time_reachability}, Lemma~\ref{full_rank} we have that the rank of $\mathcal{H}_T=n^2$.
\end{proof}
\begin{theorem}
	\label{alg1}
	Let the output of Algorithm~\ref{alg:learn_jls} be $(Y_O, Y^{+}_O)$. Then, under Assumptions~\ref{stability}--\ref{independent_subspace}, we have that $\text{Rank}(Y_O) = \frac{n(n+1)}{2}$ for all $T \geq n^2 + n - 1$.
\end{theorem}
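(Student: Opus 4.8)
The plan is to start from the factorization of $Y_O$ announced in Eq.~\eqref{output_mat}, and then to explain why the rank drops from $n^2$ (the rank of $\mathcal{H}_T$ established in Proposition~\ref{rank_kron}) down to $\frac{n(n+1)}{2}$: the loss occurs because the columns of $V$ are confined to a symmetric subspace, and the two halves of Assumption~\ref{independent_subspace} act as a surjectivity and an injectivity statement respectively.

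First I would verify $Y_O = \mathcal{H}_T V = C_T B_T V$. For a fixed copy $m$ with switch sequences $(\mu,\sigma)$, Eq.~\eqref{hankel_eq} gives $\hat{Y}_{mj} = \mathcal{H}_{\mu,\sigma}\bar{v}_j$, so that $\text{vec}(\hat{Y}_{mj}\hat{Y}^{'}_{mj}) = (\mathcal{H}_{\mu,\sigma}\otimes \mathcal{H}_{\mu,\sigma})(\bar{v}_j \otimes \bar{v}_j)$ via the identity $\text{vec}(A w w^{'} A^{'}) = (A\otimes A)(w\otimes w)$. Averaging over the $N$ copies and invoking Assumption~\ref{sample_complexity} to replace the empirical mean by the expectation over $(\mu,\sigma)$ yields $\hat{Y}_j = \mathcal{H}_T(\bar{v}_j \otimes \bar{v}_j)$; stacking over $j$ gives $Y_O = \mathcal{H}_T V$ with $V = [\bar{v}_1 \otimes \bar{v}_1,\ldots,\bar{v}_d\otimes \bar{v}_d]$, and independence of the excitation and observation switches gives $\mathcal{H}_T = C_T B_T$ as in Proposition~\ref{rank_kron}. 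Next I would argue that the columns of $V$ span the vectorized space of symmetric $pT\times pT$ matrices. Each column $\bar{v}_j \otimes \bar{v}_j = \text{vec}(\bar{v}_j \bar{v}_j^{'})$ is a vectorized rank--one symmetric matrix; the diagonal terms $v_i v_i^{'}$ come directly from $\mathcal{U}$, while the polarization identity $(v_i+v_j)(v_i+v_j)^{'} - (v_i-v_j)(v_i-v_j)^{'} = 2(v_i v_j^{'} + v_j v_i^{'})$ applied to the extension in Eq.~\eqref{extension} produces every symmetric cross term, so that these span all symmetric $pT\times pT$ matrices, a space of dimension $\frac{pT(pT+1)}{2}$.

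The core of the argument is to track the image of $\mathcal{H}_T$ on this symmetric subspace. For symmetric $S$ one has $\mathcal{H}_T\,\text{vec}(S) = \text{vec}\big(\mathbb{E}_\mu[C_\mu\, R(S)\, C^{'}_\mu]\big)$, where $R(S) = \mathbb{E}_\sigma[B_\sigma S B^{'}_\sigma]$ is symmetric $n\times n$; thus the action decomposes as a controllability operator $R$ followed by an observability operator $O(Q) = \mathbb{E}_\mu[C_\mu Q C^{'}_\mu]$. I plan to show $R$ is onto the symmetric $n\times n$ matrices by passing to its adjoint $R^{*}(Q) = \mathbb{E}_\sigma[B^{'}_\sigma Q B_\sigma]$: if $R^{*}(Q)=0$ for symmetric $Q = P_1 - P_2$ with $P_l\succeq 0$, $\langle P_1,P_2\rangle = 0$, then $\mathbb{E}_\sigma[B^{'}_\sigma P_1 B_\sigma] = \mathbb{E}_\sigma[B^{'}_\sigma P_2 B_\sigma]$, contradicting the controllability half of Assumption~\ref{independent_subspace}, so $R^{*}$ is injective and $R$ surjective. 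Dually, $O$ is injective on symmetric $n\times n$ matrices by exactly the contradiction argument already used in Lemma~\ref{full_rank}. Composing, the image of $Y_O = \mathcal{H}_T V$ equals $O(\mathrm{Sym}_n)$, whose dimension is $\frac{n(n+1)}{2}$ because $O$ is injective; by Proposition~\ref{finite_time_reachability} this holds for all $T\geq n^2+n-1$, giving $\text{Rank}(Y_O) = \frac{n(n+1)}{2}$.

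The step I expect to be the main obstacle is the bookkeeping in this last paragraph: one must check that symmetry is preserved at every stage (so that $R$ and $O$ genuinely act between symmetric cones) and that the two halves of Assumption~\ref{independent_subspace} deliver precisely surjectivity on the controllability side and injectivity on the observability side. Getting this matching right is what forces the overall rank to be the dimension $\frac{n(n+1)}{2}$ of the symmetric cone, rather than the unrestricted rank $n^2$ of $\mathcal{H}_T$.
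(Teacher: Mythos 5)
Your proof is correct and follows essentially the same route as the paper's own (much terser) argument: the factorization of $Y_O$ in Eq.~\eqref{output_mat}, the fact that the columns of $V$ span the vectorized symmetric matrices, and the orthogonal-decomposition contradiction argument from Assumption~\ref{independent_subspace} that underlies Lemma~\ref{full_rank}. Your explicit split into surjectivity of the controllability operator $R$ and injectivity of the observability operator $O$ simply fills in the details that the paper compresses into invoking Lemma~\ref{full_rank}, Proposition~\ref{rank_kron}, and the remark that one ``projects on $V$, the subspace generated by $\text{vec}(S)$ with $S$ symmetric.''
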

\begin{proof}
The proof follows from making the observation in Eq.~\eqref{output_mat}. Then we can use Lemma~\ref{full_rank}, Proposition~\ref{rank_kron} to conclude that $\text{Rank}(Y_O) = \frac{n(n+1)}{2}$ for all $T \geq n^2 + n - 1$ because we project on $V$ which is the subspace generated by vectors of the form $\text{vec}(S)$ where $S$ is symmetric.
\end{proof}
\begin{lemma}
	\label{rank_minim_lemma}
	The problem in Eq.~\eqref{rank_minim} is feasible for a JLS with inputs as described in Algorithm~\ref{alg:switches}.
\end{lemma}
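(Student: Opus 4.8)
The plan is to prove feasibility by exhibiting a single explicit feasible point built from the true (though unknown) system matrices, so that each of the three constraints in Eq.~\eqref{rank_minim} holds by construction. Writing $\mathcal{C} = \mathbb{E}_{\mu \in [s]^T}[C_\mu \otimes C_\mu]$, $\mathcal{B} = \mathbb{E}_{\sigma \in [s]^T}[B_\sigma \otimes B_\sigma]$, $M = \sum_{i=1}^s p_i A_i \otimes A_i$, and letting $V_{\mathcal{U}} = [\bar v_1 \otimes \bar v_1, \ldots, \bar v_d \otimes \bar v_d]$ be the input matrix (the object denoted $V$ in Eq.~\eqref{output_mat}), Eq.~\eqref{output_mat} reads $Y_O = \mathcal{C}\,\mathcal{B}\,V_{\mathcal{U}}$ and $Y_O^{+} = \mathcal{C}\,M\,\mathcal{B}\,V_{\mathcal{U}}$. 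This structure already dictates the natural factorization, and feasibility will reduce to checking that the pseudoinverse constraint reproduces $M$.

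First I would set $U = \mathcal{C}$ and $V' = \mathcal{B}V_{\mathcal{U}}$; since $\mathcal{C}$ has $n^2$ columns and $\mathcal{B}V_{\mathcal{U}}$ has $n^2$ rows, this is an admissible factorization with the required inner dimension, and $Y_O = UV'$ holds directly from Eq.~\eqref{output_mat}. For the semidefinite variable I would take $P = \sum_{i=1}^s p_i\,\text{vec}(A_i)\text{vec}(A_i)'$, which lies in $\mathcal{S}_{n^2}^{+}$ because it is a nonnegative ($p_i \geq 0$) combination of rank-one outer products, and which by Proposition~\ref{switches} satisfies $\mathcal{L}^{-1}(P) = M$. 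Thus two of the three constraints, namely $P \succeq 0$ and $Y_O = UV'$, are immediate, and only the pseudoinverse identity $U^{\dagger} Y_O^{+} V^{\dagger} = \mathcal{L}^{-1}(P) = M$ remains.

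The crux of the argument is verifying this last identity. Substituting the factorization, the left-hand side becomes $(\mathcal{C}^{\dagger}\mathcal{C})\,M\,(\mathcal{B}V_{\mathcal{U}}(\mathcal{B}V_{\mathcal{U}})^{\dagger})$, where $\mathcal{C}^{\dagger}\mathcal{C}$ and $\mathcal{B}V_{\mathcal{U}}(\mathcal{B}V_{\mathcal{U}})^{\dagger}$ are the orthogonal projections onto the row space of $\mathcal{C}$ and the column space of $\mathcal{B}V_{\mathcal{U}}$. I would argue that, by Lemma~\ref{full_rank} and Proposition~\ref{rank_kron}, both of these subspaces coincide with the $\frac{n(n+1)}{2}$-dimensional space of vectorized symmetric matrices: the kernel of $\mathcal{C}$ is exactly the skew-symmetric directions (as in the proof of Lemma~\ref{full_rank}), while the extension of the input basis to $\bar{\mathcal{U}}$ in Eq.~\eqref{extension} is precisely what guarantees $\text{Column Space}(\mathcal{B}V_{\mathcal{U}}) = \text{Column Space}(\mathcal{B})$, which is the same symmetric subspace. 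Since $A_i S A_i'$ is symmetric whenever $S$ is, the map $M$ leaves this subspace invariant, so both projections act as the identity on the range relevant to $M$ and can be dropped, yielding $U^{\dagger} Y_O^{+} V^{\dagger} = M$ as required.

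I expect the main obstacle to be exactly this projection step: carefully showing that composing the pseudoinverses collapses to the identity on the symmetric subspace, that $M$ respects that subspace, and that the input richness built into $\bar{\mathcal{U}}$ prevents $\text{Column Space}(\mathcal{B}V_{\mathcal{U}})$ from being strictly smaller than $\text{Column Space}(\mathcal{B})$. The similarity ambiguity noted before Eq.~\eqref{rank_minim}, whereby a generic factorization recovers only $Z^{-1}MZ$, is harmless for feasibility, since the explicit factorization above corresponds to $Z = I$. Hence the triple $(U, V, P)$ constructed here satisfies all constraints and the problem in Eq.~\eqref{rank_minim} is nonempty.
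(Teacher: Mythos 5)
Your overall strategy---exhibit the explicit triple $U=\mathcal{C}$, $V'=\mathcal{B}V_{\mathcal{U}}$, $P=\sum_{i} p_i\,\mathrm{vec}(A_i)\mathrm{vec}(A_i)'$ and check the three constraints---is exactly what the paper gestures at (its own ``proof'' is a one-line pointer to the discussion around Eq.~\eqref{output_mat}), and your verification of the first two constraints is correct. The gap is in the third step: the identity $U^{\dagger}Y_O^{+}V^{\dagger}=M$ is false. Writing $X=\mathcal{B}V_{\mathcal{U}}$, what you actually obtain is $\mathcal{C}^{\dagger}\mathcal{C}\,M\,XX^{\dagger}$, and the right-hand projection cannot be dropped: every column of $X$ has the form $\mathbb{E}[\mathrm{vec}(B_{\sigma}\bar v_j\bar v_j'B_{\sigma}')]$, hence lies in the $\tfrac{n(n+1)}{2}$-dimensional subspace of vectorized symmetric matrices, so $XX^{\dagger}$ annihilates every skew-symmetric direction. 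But $M=\sum_i p_iA_i\otimes A_i$ does not annihilate skew-symmetric directions (it maps them to skew-symmetric directions); for instance with $s=1$ and $A_1$ invertible, $\mathrm{rank}(M)=n^2$, while $\mathrm{rank}(\mathcal{C}^{\dagger}\mathcal{C}MXX^{\dagger})\le\mathrm{rank}(Y_O^{+})\le\tfrac{n(n+1)}{2}<n^2$ for $n\ge 2$, so the two sides cannot be equal. The flaw in the reasoning is that invariance of the symmetric subspace under $M$ only yields $P_{\mathrm{sym}}MP_{\mathrm{sym}}=MP_{\mathrm{sym}}$ (where $P_{\mathrm{sym}}$ denotes the orthogonal projection onto vectorized symmetric matrices); it never upgrades to $=M$, because the projection still kills the action of $M$ on the orthogonal complement.

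The supporting subspace claims are also stronger than what the paper gives you. Lemma~\ref{full_rank} shows only that no nonzero \emph{symmetric} matrix lies in the kernel of $\mathbb{E}_{\mu}[C_{\mu}\otimes C_{\mu}]$, i.e.\ the rank is at least $\tfrac{n(n+1)}{2}$; it does not show the kernel equals the skew-symmetric directions. Indeed the rows of $\mathbb{E}_{\mu}[C_{\mu}\otimes C_{\mu}]$ are of the form $\mathbb{E}[c_{\mu,i}\otimes c_{\mu,j}]$ with $i\neq j$, which are generally not symmetric-vectorized, so the row space of $\mathcal{C}$ need not contain (nor be contained in) the symmetric subspace, and even the left projection $\mathcal{C}^{\dagger}\mathcal{C}$ need not act as the identity on symmetric-vectorized vectors. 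To salvage feasibility along your lines you would have to take $P=\mathcal{L}\bigl(\mathcal{C}^{\dagger}\mathcal{C}\,M\,XX^{\dagger}\bigr)$ and prove that \emph{this} matrix is positive semidefinite, which is a genuinely different and unproven claim. To be fair, the difficulty is inherited from the paper itself: its discussion asserts $U^{\dagger}Y_O^{+}V^{\dagger}=Z^{-1}MZ$ for an invertible $Z$, which collides with the same rank obstruction. But as written, your verification of the pseudoinverse constraint does not go through.
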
	
\begin{proof}
The proof of this follows easily from the discussion in preceding section.
\end{proof}
\begin{theorem}
	\label{alg2}
	Let the output of Algorithm~\ref{alg:switches} be $(P^{*}, Z^{*})$. Then, under Assumptions~\ref{stability}--\ref{independent_subspace} and if the JLS is a minimal realization, we have that $\text{Rank}(P^{*}) = s$ for all $T \geq n^2 + n - 1$.
\end{theorem}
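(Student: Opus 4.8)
The plan is to reduce the statement to a Ho--Kalman--style uniqueness-up-to-similarity argument and then to read off the number of modes from the rank of the swap-transformed system matrix. First I would substitute the explicit forms of the observation matrices from Eq.~\eqref{output_mat}, namely $Y_O = C_T B_T V$ and $Y_O^{+} = C_T S B_T V$, where $C_T = \mathbb{E}_{\mu}[C_{\mu} \otimes C_{\mu}]$, $B_T = \mathbb{E}_{\sigma}[B_{\sigma} \otimes B_{\sigma}]$, and $S = \sum_{i=1}^s p_i (A_i \otimes A_i)$. Since the columns of $V$ have the form $\bar v_j \otimes \bar v_j = \text{vec}(\bar v_j \bar v_j^{'})$, the entire construction lives on the symmetric subspace, so by Theorem~\ref{alg1}, Proposition~\ref{rank_kron}, and Lemma~\ref{full_rank} both $C_T$ and $B_T V$ act injectively there and $\text{Rank}(Y_O) = \frac{n(n+1)}{2}$. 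The standard realization argument then shows that, for any rank-revealing factorization $Y_O = UV^{'}$, the matrix $\hat Y_O^{+} = U^{\dagger} Y_O^{+}(V^{'})^{\dagger}$ is determined up to a similarity transform, i.e.\ $\hat Y_O^{+} = Z^{-1} S Z$ for some invertible $Z$; feasibility of $\mathcal{PF}$ is guaranteed by Lemma~\ref{rank_minim_lemma}.

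For the upper bound I would exhibit the ``true'' factorization $U = C_T$, $V^{'} = B_T V$ restricted to the symmetric subspace, for which $Z = I$ and $\hat Y_O^{+} = S$. The constraint then forces $P = \mathcal{L}(S) = \sum_{i=1}^s p_i \text{vec}(A_i)\text{vec}(A_i)^{'}$, a sum of $s$ positive-weight rank-one terms, hence positive semi--definite with $\text{rank}(P) = \dim \text{span}\{A_1, \ldots, A_s\}$. Because the JLS is a minimal realization, no $A_i$ is a linear combination of the others (cf.\ Example~\ref{minimal_s}), so the $\text{vec}(A_i)$ are linearly independent and $\text{rank}(P) = s$. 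This $P$ is feasible, hence $\text{Rank}(P^{*}) \leq s$.

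The hard part is the matching lower bound. Here I would take any feasible $P \succeq 0$ with $\text{rank}(P) = r$ and write $P = \sum_{k=1}^r w_k w_k^{'}$. Applying the inverse swap transform termwise and using the realignment identity $\mathcal{L}^{-1}(\text{vec}(W)\text{vec}(W)^{'}) = W \otimes W$ gives $\mathcal{L}^{-1}(P) = \sum_{k=1}^r W_k \otimes W_k$, where $W_k = \text{vec}^{-1}(w_k)$. But the constraint says $\mathcal{L}^{-1}(P) = \hat Y_O^{+}$, and by the shift-invariance of the row/column spaces (the same property that makes the recovery well defined, so that $U \hat Y_O^{+} V^{'} = Y_O^{+}$) this matrix reproduces the shifted second--moment data $Y_O^{+}$ through the fixed factors $U, V$. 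Consequently the $r$ matrices $\{W_k\}$ define a jump linear system with at most $r$ modes whose output second moments coincide with those of the original system. Minimality of $s$ then forbids $r < s$, giving $\text{Rank}(P^{*}) \geq s$ and hence equality for all $T \geq n^2 + n - 1$.

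I expect the genuine obstacle to be making this lower-bound reduction airtight: one must verify the shift-invariance and range-containment conditions $\text{Col}(Y_O^{+}) \subseteq \text{Col}(U)$ and $\text{Row}(Y_O^{+}) \subseteq \text{Row}(V^{'})$ so that $\hat Y_O^{+}$ is a bona fide system operator, and then argue that positive semi--definiteness of $P$ is precisely what forces the recovered operator to admit a symmetric tensor-square decomposition $\sum_k W_k \otimes W_k$ interpretable as a smaller JLS. This is the step where minimality is indispensable, and where, without it, the problem degenerates into the hard non--convex mixture-recovery problem flagged in the preceding section.
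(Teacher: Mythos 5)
Your proposal follows essentially the same route as the paper's own proof: both extract the similarity relation $\hat{Y}_O^{+} = Z^{-1}\left(\sum_{i=1}^s p_i A_i \otimes A_i\right) Z$ from the non-uniqueness of the factorization $Y_O = UV^{'}$, both exhibit the feasible point $P = \sum_{i=1}^s p_i \text{vec}(A_i)\text{vec}(A_i)^{'}$ via the swap transform of Proposition~\ref{switches} (rank $s$ by minimality), and both invoke minimality to rule out a feasible $P$ of rank $s^{'} < s$. Your write-up is in fact more explicit than the paper's --- the paper compresses your entire lower-bound reduction (including the step you flag as the genuine obstacle, namely that a smaller-rank feasible $P$ yields a competing mode decomposition $\sum_k W_k \otimes W_k$ contradicting minimality) into the single unproved sentence that no $s^{'} < s$ decomposition of $Z\hat{Y}_O^{+}Z^{-1}$ can exist.
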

\begin{proof}
Recall Eq.~\eqref{output_mat}. Then $SVD(Y_O) \rightarrow (U, \Sigma , V)$ then we have that $\hat{Y}_O^{+}$ in Step $4$ of Algorithm~\ref{alg:switches} can be written as 
\[
\hat{Y}_O^{+} = Z^{-1}(\sum_{i=1}^s p_i A_i \otimes A_i) Z
\]
for some $Z$. Since $s$ is minimal, we cannot have $s^{'} < s$ such that
\[
Z \hat{Y}_O^{+} Z^{-1} = (\sum_{i=1}^{s^{'}} \hat{p}_i \hat{A}_i \otimes \hat{A}_i) 
\]
Then let $P = \sum_{i=1}^s p_i\text{vec}(A_i)\text{vec}(A_i)^{'}$ is indeed positive definite with rank $s$. From Proposition~\ref{switches} we have that 
\[
\mathcal{L}(Z \hat{Y}_O^{+} Z^{-1}) = P
\]
Now setting $P=P^{*}, Z = Z^{*}$, we have our assertion.
\end{proof}
In this section we show the correctness of our algorithms. First, in Proposition~\ref{finite_time_reachability}, we show that there exists a maximal value $T$ used in Algorithm~\ref{alg:learn_jls} such that the rank of $\mathbb{E}_{\mu^t, \sigma^t}[\mathcal{H}_{\mu^t, \sigma^t} \otimes \mathcal{H}_{\mu^t, \sigma^t}]$ does not increase for any $t \geq T$. Although $T$ is itself a function of the state space dimension, it is still important to know a bound on it. For example, suppose we knew an upper bound, $a$, on the state space dimension then we could recover the true state space dimension by just setting $T = a^2 + a -1$. We show in Theorem~\ref{alg1} that Algorithm~\ref{alg1} indeed gives us the true state space dimension. Based on our knowledge of $T$ we can now find the minimal number of discrete modes. Theorem~\ref{alg2} states that, for a minimal JLS, Algorithm~\ref{alg:switches} gives us a positive semi--definite matrix that has the same rank as the minimal number of discrete modes. However, finding this number is hard because Eq.~\eqref{rank_minim} is a non--convex problem.

\section{Conclusion}
\label{conclusions}
In this paper we studied minimal realization problems in a JLS when only continuous input and output are observed. We approached the problem by first finding the span of hidden state space. We showed that there exists a maximum value of $T$, or information complexity, for which a certain observation matrix achieves full rank. This rank was then used to find the number of discrete modes of the JLS. We showed that this was equivalent to solving a non--convex problem with possibly multiple solutions. Despite this, there exists a notion of minimality which characterizes the minimal pair of state dimension and discrete modes to completely span the observed output space. We also discussed how computing this minimal pair is hard as it has possible connections to component identification in mixture models, topic models etc.  

There are several important directions of future work. Foremost is that we do not address issue the question of sample complexity. Specifically, we do not know if the algorithmic approach we discuss here is efficient in terms of number of JLS copies required. Finding a sample efficient algorithm to recover the minimal parameters seems a reasonable next course of action. 

Second we still do not have a complete algorithm to learn $p_i$s. However, there is substantial work on learning Gaussian mixtures(See~\cite{ge2015learning}). One possible solution to this might be to transform the output observation into a covariance matrix that is a mixture of Gaussians(which correspond to the discrete modes). We can then recover all parameters, \textit{i.e.}, $\{p_i, A_i\}_{i=1}^s$ using similar techniques as Huang et. al.

The work here is complementary to the work on balanced truncation in JLS in~\cite{kotsalis2008balanced} where the number of discrete modes are known and fixed apriori. It will be interesting to understand the error in model misspecification if we choose $T < n^2 + n - 1$ and how the truncated state dimension affects the minimal number of nodes.
\bibliographystyle{IEEEtran}
\bibliography{biblio.bib}
\end{document}